\DeclareSymbolFont{rsfscript}{OMS}{rsfs}{m}{n}
\DeclareSymbolFontAlphabet{\mathrsfs}{rsfscript}
\renewcommand{\O}{\mathcal{O}}
\DeclareMathOperator{\lspan}{span}
\newcommand{\scalar}{\boldsymbol{\cdot}}
\newtheorem{theorem}{Theorem}
\newtheorem{corollary}[theorem]{Corollary}
\newtheorem{proposition}[theorem]{Proposition}
\newtheorem{problem}{Problem}
\newtheorem{remark}[theorem]{Remark}
\begin{document}
\title{Preimage problems\\for deterministic finite automata}
\author {Mikhail V. Berlinkov}
\address{Institute of Natural Sciences and Mathematics,\\
Ural Federal University, Ekaterinburg, Russia}
\email{m.berlinkov@gmail.com}
\author{Robert Ferens}
\email{robert.ferens@cs.uni.wroc.pl}
\address{Institute of Computer Science,\\
University of Wroc{\l}aw, Wroc{\l}aw, Poland}
\author{Marek Szyku{\l}a}
\email{msz@cs.uni.wroc.pl}
\address{Institute of Computer Science,\\
University of Wroc{\l}aw, Wroc{\l}aw, Poland}

\begin{abstract}
Given a~subset of states $S$ of a~deterministic finite automaton and a~word $w$, the preimage is the subset of all states mapped to a~state in $S$ by the action of $w$.
We study three natural problems concerning words giving certain preimages.
The first problem is whether, for a~given subset, there exists a~word \emph{extending} the subset (giving a~larger preimage).
The second problem is whether there exists a~\emph{totally extending} word (giving the whole set of states as a~preimage)---equivalently, whether there exists an~\emph{avoiding} word for the complementary subset.
The third problem is whether there exists a~\emph{resizing} word.
We also consider variants where the length of the word is upper bounded, where the size of the given subset is restricted, and where the automaton is strongly connected, synchronizing, or binary.
We conclude with a summary of the complexities in all combinations of the cases.

\smallskip
\noindent\textsc{Keywords}: avoiding word, extending word, extensible subset, reset word, synchronizing automaton
\end{abstract}
\maketitle
\section{Introduction}

A~deterministic finite complete (semi)automaton $\mathrsfs{A}$ is a~triple $(Q,\Sigma,\delta)$, where $Q$ is the set of \emph{states}, $\Sigma$ is the input \emph{alphabet}, and $\delta\colon Q \times \Sigma \to Q$ is the \emph{transition function}.
We extend $\delta$ to a~function $Q \times \Sigma^* \to Q$ in the usual way.
Throughout the paper, by $n$ we always denote the number of states $|Q|$.

When the context is clear, given a~state $q \in Q$ and a~word $w \in \Sigma^*$, we write shortly $q\cdot w$ for $\delta(q,w)$.
Given a~subset $S \subseteq Q$, the \emph{image} of $S$ under the action of a~word $w \in \Sigma^*$ is $S\cdot w = \delta(S,w) = \{q\cdot w \mid q \in S\}$.
The \emph{preimage} is $S\cdot w^{-1} = \delta^{-1}(S,w) = \{q \in Q \mid q\cdot w \in S\}$.
If $S=\{q\}$, then we usually simply write $q\cdot w^{-1}$.

We say that a~word $w$ \emph{compresses} a~subset $S$ if $|S\cdot w| < |S|$, \emph{avoids} $S$ if $(Q\cdot w) \cap S = \emptyset$, \emph{extends} $S$ if $|S\cdot w^{-1}| > |S|$, and \emph{totally extends} $S$ if $S\cdot w^{-1} = Q$.
A~subset $S$ is \emph{compressible}, \emph{avoidable}, \emph{extensible}, and \emph{totally extensible}, if there is a~word that, respectively, compresses, avoids, extends and totally extends it.

\begin{remark}
A~word $w \in \Sigma^*$ is avoiding for $S \subseteq Q$ if and only if $w$ is totally extending for $Q \setminus S$.
\end{remark}

\begin{figure}[htb]\centering
\includegraphics{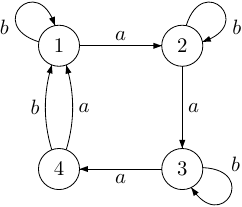}
\caption{The \v{C}ern\'{y} automaton with $4$ states.}\label{fig:cerny4}
\end{figure}

Fig.~\ref{fig:cerny4} shows an~example automaton.
For $S=\{2,3\}$, the shortest compressing word is $aab$, and we have $\{2,3\}\cdot aab = \{1\}$, while the shortest extending word is $ba$, and we have $\{2,3\} \cdot (ba)^{-1} = \{1,2\}\cdot b^{-1} = \{1,2,4\}$.

Note that the preimage of a~subset under the action of a~word can be smaller than the subset. In this case, we say that a~word \emph{shrinks} the subset (not to be confused with compressing when the image is considered).
For example, in Fig.~\ref{fig:cerny4}, subset $\{3,4\}$ is shrank by $b$ to subset $\{4\}$.

Note that shrinking a~subset is equivalent to extending its complement. Similarly, a~word totally extending a~subset also shrinks its complement to the empty set.
\begin{remark}
$|S\cdot w^{-1}| > |S|$ if and only if $|(Q \setminus S)\cdot w^{-1}| < |Q \setminus S|$, and
$S \cdot w^{-1} = Q$ if and only if $(Q \setminus S)\cdot w^{-1} = \emptyset$.
\end{remark}
Therefore, avoiding a~subset is equivalent to shrinking it to the empty set.

The \emph{rank} of a~word $w$ is the cardinality of the image $Q \cdot w$.
A~word of rank $1$ is called \emph{reset} or \emph{synchronizing}, and an~automaton that admits a~reset word is called \emph{synchronizing}.
Also, for a~subset $S \subseteq Q$, we say that a~word $w \in \Sigma^*$ such that $|S\cdot w|=1$ \emph{synchronizes} $S$.

Synchronizing automata serve as transparent and natural models of
various systems in many applications in different fields, e.g., in
coding theory \cite{BPR2010CodesAndAutomata,Jurgensen2008},
model testing of reactive systems \cite{Sandberg2005Survey},
robotics \cite{Na1986},
and biocomputing \cite{BAPLS2003}.
They also reveal interesting connections with many parts of mathematics.
For example, some of the recent works involve
group theory \cite{ArCaSt17},
representation theory \cite{AMSV2009RepresentationTheory},
computational complexity \cite{OM2010},
optimization and convex geometry \cite{GonzeJungers2016OnSynchronizingProbabilityFunction},
regular languages and universality \cite{RSX2012UniversalityProblems},
approximability \cite{GawrychowskiStraszak2015StrongInapproximability},
primitive sets of matrices \cite{BlondelJungersOlshevsky2015},
and graph theory \cite{GrechKisielewiczEdgeColoredDigraphs2016}.
For a~brief introduction to the theory of synchronizing automata we refer the reader to an excellent, though quite outdated, survey~\cite{Volkov2008Survey}.

The famous \v{C}ern\'{y} conjecture \cite{Cerny1964}, which was formally stated in~1969 during a~conference \cite{Volkov2008Survey}, is one of the most longstanding open problems in automata theory.
It states that a~synchronizing automaton has a~reset word of length at most $(n-1)^2$.
The currently best upper bound is cubic and has been improved recently \cite{Shitov2019} (cf.\ \cite{Szykula2018ImprovingTheUpperBound}).
Besides the conjecture, algorithmic issues are also important.
Unfortunately, the problem of finding a \emph{shortest} reset word is computationally hard \cite{Ep1990,OM2010}, and also its length approximation remains hard \cite{GawrychowskiStraszak2015StrongInapproximability}.
We also refer to surveys \cite{Sandberg2005Survey,Volkov2008Survey} dealing with algorithmic issues and the \v{C}ern\'{y} conjecture.

Compressing and extending a~subset in general play a~crucial role in the synchronization of automata and related areas.
In fact, all known algorithms finding a~reset word use finding words that either compresses or extends a~subset as subprocedures (e.g.~\cite{AG2016GreedyAlgorithms,BS2016AlgebraicSynchronizationCriterion,Ep1990,KKS2015ComputingTheShortestResetWords,RS2015ForwardAndBackward}).
Moreover, probably all proofs of upper bounds on the length of the shortest reset words use bounds on the length of words that compress (e.g.~\cite{AG2016GreedyAlgorithms,AV2005SynchronizingGeneralizedMonotonicAutomata,BS2016AlgebraicSynchronizationCriterion,BiskupPlandowski2009HuffmanCodes,Ep1990,GK2013AutomataRespectingIntervals,Szykula2018ImprovingTheUpperBound,Tr2007Aperiodic,Volkov2009ChainOfPartialOrders})
or extend (e.g.~\cite{BBP2011QuadraticUpperBoundInOneCluster,Berlinkov2013QuasiEulerianOneCluster,BS2016AlgebraicSynchronizationCriterion,Jungers2012SynchronizingProbabilityFunction,Kari2003Eulerian,Steinberg2011OneClusterPrime,Szykula2018ImprovingTheUpperBound}) some subsets.

In this paper, we study several problems about finding a word yielding a~certain preimage.
We provide a~systematic view of their computational complexity in various combinations of cases.

\subsection{Compressing a~subset}
The complexities of problems related to images of a~subset have been well studied.
It is known that given an~automaton $\mathrsfs{A}$ and a~subset $S \subseteq Q$, determining whether there is a~word that synchronizes it is PSPACE-complete \cite{Rystsov1983PolynomialCompleteProblems}.
The same holds even for strongly connected binary automata \cite{Vorel2014SubsetSynchronizationOfTransitiveAutomata}.

On the other hand, checking whether the automaton is synchronizing, i.e.\ whether there is a~word that synchronizes $Q$, can be solved in $\O(|\Sigma|n^2)$ time and space \cite{Cerny1964,Ep1990,Volkov2008Survey} and in $\O(n)$ average time and space when the automaton is randomly chosen ~\cite{Berlinkov2016OnTheProbabilityToBeSynchronizable}.
To this end, we verify whether all pairs of states are compressible.
Using the same algorithm, we can determine whether a~given subset is compressible.

Deciding whether there exists a~synchronizing word of a~given length is NP-complete \cite{Ep1990} (cf.~\cite{OM2010} for the complexity of the corresponding functional problems), even if the given automaton is binary.
The NP-completeness holds even when the automaton is Eulerian and binary \cite{Vorel2017ComplexityEulerian}, which immediately implies that for the class of strongly connected automata the complexity is the same.

However, deciding whether there exists a~word of a~given length that only compresses a~subset still can be solved in $\O(|\Sigma|n^2)$ time, as for every pair of states we can compute a~shortest word that compresses the pair.

The problems related to images have been also studied in other settings for both complexity and the bounds on the length of the shortest words, for example, in the case of a~nondeterministic automaton \cite{Rystsov1983PolynomialCompleteProblems}, in the case of a~partial deterministic finite automaton \cite{Martyugin2014ComputationalComplexity}, in the partial observability setting for various kinds of automata \cite{GLS2014SynchronizingStrategies}, and for the reachability of a~given subset in the case of a~deterministic finite automaton \cite{BondarVolkov2016CompletelyReachableAutomata,GonzeJungers2018OnCompletelyReachable}.

\subsection{Extending a~subset and our contributions}

\begin{table}[htb]\small\renewcommand{\arraystretch}{1.3}\centering
\newcommand{\htabsep}{1pt}
\newcommand{\rowt}[1]{\multirow{2}{*}{#1}}
\newcommand{\cold}[1]{\multicolumn{1}{c|}{#1}}
\newcommand{\colc}[1]{\multicolumn{1}{|c|}{#1}}
\newcommand{\colt}[1]{\multicolumn{2}{c|}{#1}}
\newcommand{\colttt}[1]{\multicolumn{4}{c|}{#1}}
\newcommand{\bm}{\boldmath}
\caption{The computational complexity of decision problems (new results are in bold): given an~automaton $\mathrsfs{A}=(Q,\Sigma,\delta)$ with $n$ states and a~subset $S \subseteq Q$, is there a~word $w \in \Sigma^*$ such that:}\label{tab:complexities_existence}
\begin{tabular}{|l|c|c|c|c|}\cline{2-5}
\cold{}                              &\colttt{Subclass of automata} \\\hline
\colc{\rowt{Problem}}                & All        & Strongly                    &\rowt{Synchronizing}                       & Str.\ con.   \\
                                     & automata   & connected                   &                                           & and synch.   \\[\htabsep]\hline
$|S\cdot w|=1$                       &\colt{PSPACE-c}                           &\rowt{$\O(1)$}                             &\rowt{$\O(1)$}\\
(reset word)                         &\colt{\cite{Rystsov1983PolynomialCompleteProblems,Vorel2014SubsetSynchronizationOfTransitiveAutomata}}&&\\[\htabsep]\hline
$|S\cdot w|<|S|$                     &\colt{$\O(|\Sigma|n^2)$}                  &\rowt{$\O(1)$}                             &\rowt{$\O(1)$}\\
(compressing word)                   &\colt{\cite{Cerny1964,Volkov2008Survey}}  &                                           &              \\[\htabsep]\hline

$|S\cdot w^{-1}|>|S|$                &\colt{\bf PSPACE-c}                       & \bf PSPACE-c                              &\rowt{$\O(1)$}\\
(Problem~\ref{pbm:extensible})       &\colt{(Thm.~\ref{thm:extensible})}        &(Prop.~\ref{pro:extensible_synchro})       &              \\[\htabsep]\hline
$S\cdot w^{-1}=Q$                    &\colt{\bf PSPACE-c}                       & \bm $\O(|\Sigma|n)$                       &\rowt{$\O(1)$}\\
(Problem~\ref{pbm:totallyextensible})&\colt{(Thm.~\ref{thm:extensible})}        & (Thm.~\ref{thm:totallyextensible_synchro})&              \\[\htabsep]\hline

$|S\cdot w^{-1}|>|S|$, $|S|\le k$    &\colt{\bm$\O(|\Sigma|n^k)$}               & \bm $\O(|\Sigma|n^k)$                     &\rowt{$\O(1)$}\\
(Problem~\ref{pbm:extensible_small}) &\colt{(Prop.~\ref{pro:extensible_small})} & (Prop.~\ref{pro:extensible_small})        &              \\[\htabsep]\hline
$S\cdot w^{-1}=Q$, $|S|\le k$        &\colt{\bm$\O(|\Sigma|n^k+n^3)$}         & \bm $\O(|\Sigma|n)$                       &\rowt{$\O(1)$}\\
(Problem~\ref{pbm:totallyextensible_small})&\colt{(Prop.~\ref{pro:totallyextensible_small})}&(Thm.~\ref{thm:totallyextensible_synchro})&   \\[\htabsep]\hline
$|S\cdot w^{-1}|>|S|$, $|S|\ge n-k$  & \bf PSPACE-c                &\rowt{Open} & \bf PSPACE-c                              &\rowt{$\O(1)$}\\
(Problem~\ref{pbm:extensible_large}, $k \ge 2$) & (Thm.~\ref{thm:extensible_large}) &   & (Thm.~\ref{thm:extensible_large}) &    \\[\htabsep]\hline
$S\cdot w^{-1}=Q$, $|S|\ge n-k$                 &\colt{\bm $\O(|\Sigma|n^k+n^3)$}       & \bm $\O(|\Sigma|n)$               &\rowt{$\O(1)$}\\
(Problem~\ref{pbm:totallyextensible_large}, $k \ge 2$)&\colt{(Thm.~\ref{thm:totallyextensible_large})}&(Thm.~\ref{thm:totallyextensible_synchro})&    \\[\htabsep]\hline
$S \cdot w^{-1}=Q$, $|S|= n-1$       &\colt{\bm $\O(|\Sigma|n^2)$}              & \rowt{\bm $\O(|\Sigma|)$}                 &\rowt{$\O(1)$}\\
(Problem~\ref{pbm:avoidable_state})  &\colt{(Thm.~\ref{thm:avoiding_characterization})} &                                   &              \\[\htabsep]\hline

$|S\cdot w^{-1}| \neq |S|$           &\colt{\bm $\O(|\Sigma|n^3)$}              &\rowt{$\O(1)$}                             &\rowt{$\O(1)$}\\
(Problem~\ref{pbm:resize})           &\colt{(Thm.~\ref{thm:resize_algorithm})}  &                                           &              \\[\htabsep]\hline
\end{tabular}
\end{table}

\begin{table}[htb]\small\renewcommand{\arraystretch}{1.3}\centering
\newcommand{\htabsep}{1pt}
\newcommand{\rowt}[1]{\multirow{2}{*}{#1}}
\newcommand{\cold}[1]{\multicolumn{1}{c|}{#1}}
\newcommand{\colc}[1]{\multicolumn{1}{|c|}{#1}}
\newcommand{\colt}[1]{\multicolumn{2}{c|}{#1}}
\newcommand{\colttt}[1]{\multicolumn{4}{c|}{#1}}
\newcommand{\bm}{\boldmath}
\caption{The computational complexity of decision problems (new results are in bold): given an~automaton $\mathrsfs{A}=(Q,\Sigma,\delta)$ with $n$ states, a~subset $S \subseteq Q$, and an~integer $\ell$ given in binary form, is there are a~word $w \in \Sigma^*$ of length $\le \ell$ such that:}\label{tab:complexities_length}
\begin{tabular}{|l|c|c|c|c|}\cline{2-5}

\cold{}                                                   & \colttt{Subclass of automata}                                                                                                                                                                                           \\ \hline
\colc{\rowt{Problem}}                                     & All        & Strongly                                                                                             &\rowt{Synchronizing}                              & Str.\ con.                                       \\
                                                          & automata   & connected                                                                                            &                                                  & and synch.                                       \\[\htabsep]\hline
$|S \cdot w|=1$                                           & \colt{PSPACE-c}                                                                                                   & NP-c                                             & NP-c                                             \\
(reset word)                                              & \colt{\cite{Rystsov1983PolynomialCompleteProblems,Vorel2014SubsetSynchronizationOfTransitiveAutomata}}            & \cite{Ep1990}                                    & \cite{Vorel2017ComplexityEulerian}               \\[\htabsep]\hline
$|S\cdot w|<|S|$                                          & \colt{$\O(|\Sigma|n^2)$}                                                                                          & $\O(|\Sigma|n^2)$                                & $\O(|\Sigma|n^2)$                                \\
(compressing word)                                        & \colt{\cite{Ep1990}}                                                                                              & \cite{Ep1990}                                    & \cite{Ep1990}                                    \\[\htabsep]\hline

$|S\cdot w^{-1}|>|S|$                                     & \colt{\bf PSPACE-c}                                                                                               & \bf PSPACE-c                                     & \bf NP-c                                         \\
(Problem~\ref{pbm:extensible_len})                        & \colt{(Subsec.~\ref{subsec:extensible_short})}                                                                    & (Subsec.~\ref{subsec:extensible_short})          & (Thm.~\ref{thm:avoiding_length_NP-c})            \\[\htabsep]\hline
$S\cdot w^{-1}=Q$                                         & \colt{\bf PSPACE-c}                                                                                               & \bf NP-c                                         & \bf NP-c                                         \\
(Problem~\ref{pbm:totallyextensible_len})                 & \colt{(Subsec.~\ref{subsec:extensible_short})}                                                                    & (Cor.~\ref{cor:totallyextending_large_len_sych}) & (Cor.~\ref{cor:totallyextending_large_len_sych}) \\[\htabsep]\hline

$|S\cdot w^{-1}|>|S|$, $|S|\le k$                         & \colt{\bm $\O(|\Sigma|n^k)$}                                                                                      & \bm $\O(|\Sigma|n^k)$                            & \bm $\O(|\Sigma|n^k)$                            \\
(Problem~\ref{pbm:extensible_small_len}                   & \colt{(Prop.~\ref{pro:extensible_small})}                                                                         & (Prop.~\ref{pro:extensible_small})               & (Prop.~\ref{pro:extensible_small})               \\[\htabsep]\hline

$S\cdot w^{-1}=Q$, $|S|\le k$                             & \colt{\bf NP-c}                                                                                                   & \bf NP-c                                         & \bf NP-c                                         \\
(Problem~\ref{pbm:totallyextensible_small_len})           & \colt{(Prop.~\ref{pro:totallyextensible_small_len})}                                                              & (Prop.~\ref{pro:totallyextensible_small_len})    & (Prop.~\ref{pro:totallyextensible_small_len})    \\[\htabsep]\hline

$|S\cdot w^{-1}|>|S|$, $|S|\ge n-k$                       & \bf PSPACE-c &\rowt{Open}                                                                                         & \bf PSPACE-c                                     & \bf NP-c                                         \\
(Problem~\ref{pbm:extensible_large_len}, $k\ge 2$)        & (Thm.~\ref{thm:extensible_large}) &                                                                               & (Thm.~\ref{thm:extensible_large})                & (Cor.~\ref{cor:totallyextending_large_len_sych}) \\[\htabsep]\hline

$S\cdot w^{-1}=Q$, $|S|\ge n-k$                           & \colt{\bf NP-c}                                                                                                   & \bf NP-c                                         & \bf NP-c                                         \\
(Problem~\ref{pbm:totallyextensible_large_len}, $k\ge 2$) & \colt{(Cor.~\ref{cor:totallyextending_large_len_sych})}                                                           & (Cor.~\ref{cor:totallyextending_large_len_sych}) & (Cor.~\ref{cor:totallyextending_large_len_sych}) \\[\htabsep]\hline

$S \cdot w^{-1}=Q$, $|S|= n-1$                            & \colt{\bf NP-c}                                                                                                   & \bf NP-c                                         & \bf NP-c                                         \\
(Problem~\ref{pbm:avoidable_state_len})                   & \colt{(Thm.~\ref{thm:avoiding_length_NP-c})}                                                                      & (Thm.~\ref{thm:avoiding_length_NP-c})            & (Thm.~\ref{thm:avoiding_length_NP-c})            \\[\htabsep]\hline
$|S\cdot w^{-1}| \neq |S|$                                & \colt{\bm $\O(|\Sigma|n^3)$}                                                                                      & \bm $\O(|\Sigma|n^3)$                            & \bm $\O(|\Sigma|n^3)$                            \\
(Problem~\ref{pbm:resize_len})                            & \colt{(Thm.~\ref{thm:resize_algorithm})}                                                                          & (Thm.~\ref{thm:resize_algorithm})                & (Thm.~\ref{thm:resize_algorithm})                \\[\htabsep]\hline

\end{tabular}
\end{table}

In contrast to the problems related to images (compression), the complexity of the problems related to preimages has not been thoroughly studied in the literature.
In the paper, we fill this gap and give a~comprehensive analysis of all basic cases.
We study three families of problems.
As noted before, extending is equivalent to shrinking the complementary subset, hence we need to deal only with the extending word problems.
Similarly, totally extending words are equivalent to avoiding the complement, thus we do not need to consider avoiding a set of states separately.

\noindent\textbf{Extending words}:
Our first family of problems is the question whether there exists an~extending word (Problems~\ref{pbm:extensible}, \ref{pbm:extensible_len}, \ref{pbm:extensible_small}, \ref{pbm:extensible_small_len}, \ref{pbm:extensible_large}, \ref{pbm:extensible_large_len} in this paper).

This is motivated by the fact that finding such a~word is the basic step of the so-called \emph{extension method} of finding a~reset word, which is used in many proofs and also some algorithms.
The extension method of finding a~reset word is as follows:
we start from some singleton $S_0 =\{q\}$ and iteratively find extending words $w_1,\ldots,w_k$ such that $|S_0 \cdot w_1^{-1} \cdots w_i^{-1}| > |S_0 \cdot w_1^{-1} \cdots w_{i-1}^{-1}|$ for $1 \le i \le k$, and where $S_0 \cdot w_1^{-1} \cdots w_k^{-1} = Q$.
For finding a~short reset word one needs to bound the lengths of the extending words.
For instance, in the case of synchronizing Eulerian automata, the fact that there always exists an~extending word of length at most $n-1$ implies the upper bound $(n-2)(n-1)+1$ on the length of the shortest reset words for this class \cite{Kari2003Eulerian} (the first extending step requires just one letter, as we can choose an arbitrary singleton).
In this case, a~polynomial algorithm for finding extending words has been proposed \cite{BS2016AlgebraicSynchronizationCriterion}.

\noindent\textbf{Totally extending words and avoiding}:
We study the problem whether there exists a~totally extending word (Problems~\ref{pbm:totallyextensible}, \ref{pbm:totallyextensible_len}, \ref{pbm:totallyextensible_small}, \ref{pbm:totallyextensible_small_len}, \ref{pbm:totallyextensible_large}, \ref{pbm:totallyextensible_large_len} in this paper).
The question of the existence of a~totally extending word is equivalent to the question of the existence of an~avoiding word for the complementary subset.

Totally extending words themselves can be viewed as a~generalization of reset words: a~word totally extending a~singleton to the whole set of states $Q$ is a~reset word.
If we are not interested in bringing the automaton into one particular state but want it to be in any of the states from a~specified subset, then it is exactly the question about totally extending word for our subset.
In view of applications of synchronization, this can be particularly useful when we deal with non-synchronizing automata, where reset words cannot be applied.

Avoiding word problem is a~recent concept that is dual to synchronization: instead of being in some states, we want not to be in them.
A~quadratic upper bound on the length of the shortest avoiding words of a~single state has been established \cite{Szykula2018ImprovingTheUpperBound}, which led to an improvement of the best known upper bound on the length of the shortest reset words (see also \cite{Shitov2019} for a very recent improvement of that improvement of the upper bound).
Furthermore, better upper bounds on the length of the shortest avoiding words would lead to further improvements; in particular, a subquadratic upper bound implies the upper bound on the reset threshold equal to $7n^3/48+o(n^3)$ \cite{GJT2014ANoteOnARecentAttempt}.
There is a precise conjecture that the shortest avoiding words have length at most $2n-2$ \cite[Open~Problem~1]{Szykula2018ImprovingTheUpperBound}.
The computational complexity of the problems related to avoiding, both a~single state or a~subset, has not been established before.
We give a~special attention to the problem of avoiding one state and a~small subset of states (totally extending a~large subset), as since they seem to be most important in view of their applications (and as we show, the complexity grows with the size of the subset to avoid).

\noindent\textbf{Resizing}: Shrinking a~subset is dual to extending, i.e.\ shrinking a~subset means extending its complement. Therefore, the complexity immediately transfers from the previous results.
However, in Section~\ref{sec:resize} we consider the problem of determining whether there is a~word whose inverse action results in a~subset having a~different size, that is, either extends the subset or shrinks it (Problems~\ref{pbm:resize},~\ref{pbm:resize_len}).

Interestingly, in contrast with the computationally difficult problems of finding a~word that extends the subset and finding a~word that shrinks the subset, for this variant there exists a~polynomial algorithm finding a~shortest resizing word in all cases.

We can mention that in some cases extending and shrinking words are related, and it may be enough to find either one.
For instance, this is used in the so-called \emph{averaging trick}, which appears in several proofs \cite{BS2016AlgebraicSynchronizationCriterion,Jungers2012SynchronizingProbabilityFunction,Kari2003Eulerian,Steinberg2011AveragingTrick}.

\noindent\textbf{Summary}:
For all the problems we consider the subclasses of strongly connected, synchronizing, and binary automata.
Also, we consider the problems where an~upper bound on the length of the word is additionally given in a~binary form in the input.
Since, in most cases, the problems are computationally hard, in Section~\ref{sec:extending_small} and Section~\ref{sec:extending_large}, we consider the complexity parameterized by the size of the given subset.

Table~\ref{tab:complexities_existence} and Table~\ref{tab:complexities_length} summarize our results together with known results about compressing words.
For the cases where a~polynomial algorithm exists, we put the time complexity of the best one known.
All the hardness results hold also in the case of a~binary alphabet.

\section{Extending a~subset in general}\label{sec:extending}

\subsection{Unbounded word length}

In the first studied case, we do not have any restriction on the given subset $S$ neither on the length of the extending word.
We deal with the following problems:
\begin{problem}[Extensible subset]\label{pbm:extensible}
Given $\mathrsfs{A}=(Q,\Sigma,\delta)$ and a~subset $S \subseteq Q$, is $S$ extensible?
\end{problem}
\begin{problem}[Totally extensible subset]\label{pbm:totallyextensible}
Given $\mathrsfs{A}=(Q,\Sigma,\delta)$ and a~subset $S \subseteq Q$, is $S$ totally extensible?
\end{problem}

\begin{theorem}\label{thm:extensible}
Problem~\ref{pbm:extensible} and Problem~\ref{pbm:totallyextensible} are PSPACE-complete, even if $\mathrsfs{A}$ is strongly connected.
\end{theorem}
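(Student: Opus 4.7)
The plan splits into membership in PSPACE and PSPACE-hardness while preserving strong connectedness.

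For the upper bound, I would use the standard subset (power-set) simulation combined with Savitch's theorem. A nondeterministic machine guesses the sought word letter by letter while maintaining the running subset $T := S \cdot u^{-1}$ for the current prefix $u$; the update $T \mapsto T \cdot a^{-1}$ is computable in polynomial time, and $T$ occupies only $O(n)$ bits. We accept the moment $|T|>|S|$ (for Problem~\ref{pbm:extensible}) or $T=Q$ (for Problem~\ref{pbm:totallyextensible}). This yields an NPSPACE algorithm, so both problems lie in PSPACE by Savitch's theorem.

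For hardness, the natural starting point is subset synchronization, which is PSPACE-complete already for strongly connected binary automata \cite{Vorel2014SubsetSynchronizationOfTransitiveAutomata}: given $\mathrsfs{A}=(Q,\Sigma,\delta)$ and $R\subseteq Q$, decide whether there is $u$ with $|R\cdot u|=1$. I would build a strongly connected $\mathrsfs{A}'=(Q',\Sigma',\delta')$ and $S'\subseteq Q'$ so that $R$ is synchronizable in $\mathrsfs{A}$ iff $S'$ is totally extensible in $\mathrsfs{A}'$. The design hinges on the observation that total extensibility is equivalent to the global reachability condition $Q'\cdot w\subseteq S'$, which already generalizes synchronization (the case $|S'|=1$). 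A gadget built on top of $\mathrsfs{A}$, with a few new states and auxiliary letters, can couple this global reachability with synchronization of the specific subset $R$: the auxiliary letters launch copies of the states of $R$ into the simulation, while $S'$ is chosen to consist of a single ``collapsed'' target state (plus a bookkeeping remainder), so that driving everyone into $S'$ forces the $R$-copies to have already merged into one state of $\mathrsfs{A}$. Problem~\ref{pbm:extensible} is then obtained by padding $S'$ with dummy states that every letter sends into $S'$, so that any strict increase of the preimage is equivalent to the preimage becoming all of $Q'$.

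The main obstacle I anticipate is not the logical skeleton of the reduction but combining strong connectedness with faithfulness simultaneously: a naive gadget tends either to introduce sink-like structure that breaks strong connectedness, or to open shortcut words that totally extend $S'$ without performing the intended synchronization. I would address this by sprinkling dedicated ``return'' transitions on extra letters that walk $Q'$ along a Hamiltonian cycle while resetting the gadget's progress, so that any word using them cannot certify (total) extension, and the only useful prefixes avoid these letters throughout; such prefixes are in direct bijection with candidate synchronizing words for $R$ in $\mathrsfs{A}$, which gives correctness in both directions and completes the reduction.
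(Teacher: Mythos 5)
Your PSPACE membership argument is essentially the paper's: guess the letters nondeterministically while storing only the current preimage $T$ and updating $T\mapsto T\cdot a^{-1}$, then invoke Savitch. (One terminological slip: since $S\cdot(ua)^{-1}=(S\cdot a^{-1})\cdot u^{-1}$, the set you maintain is the preimage under the \emph{reversed} sequence of guessed letters, i.e.\ you are effectively building the witness word from its last letter backwards; this does not affect correctness of the existence test.) This part is fine.

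The hardness part has a genuine gap. You reduce from subset synchronization, whereas the paper reduces from DFA intersection nonemptiness, and the difference matters. The hard instances of subset synchronization over strongly connected automata are necessarily \emph{non-synchronizing} (in a synchronizing automaton every subset is compressible to a singleton, so the problem is trivial there). Consequently, in a yes-instance there is a word $u$ merging $R$, but there is \emph{no} word over the original alphabet that drives all of $Q$ into a near-singleton target. Your reduction therefore needs extra machinery to (a) collapse $Q\setminus R$ (and all intermediate images) into $S'$ and (b) route the merge point of $R$ --- which is \emph{not fixed} by the instance, since $R$ may synchronize to different states under different words --- into the fixed target state in $S'$. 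Any added letter strong enough to do either of these (e.g.\ one collapsing $Q$ onto a fixed state, or one sending every state of $Q$ to the target) makes $R$ trivially synchronizable and destroys the backward direction; your proposed ``return'' letters that permute $Q'$ along a Hamiltonian cycle cannot do either job precisely because they are permutations. The sketch never explains how the dynamics distinguish $R$ from $Q\setminus R$ without altering whether $R$ is synchronizable, and this is the crux, not a detail. The secondary step is also underspecified: padding $S'$ with dummy states that map into $S'$ makes those dummies appear in \emph{every} preimage, so it does not force ``extensible $\Rightarrow$ totally extensible.''

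For contrast, the paper starts from Finite Automata Intersection, where the accepting sets $F_i$ are fixed in the instance, so the target of the simulation is known in advance and the mismatch above never arises. It attaches to each accepting state a large cyclic ``ballast'' set $\Gamma_i$ of size $2M$ on which a fresh letter $\beta$ acts cyclically (so a preimage meeting $\Gamma_i$ must contain all of it), makes $S$ contain all ballast so that $|S|$ exceeds $|Q'\setminus\Gamma_i|$ for every $i$ (hence any extension must capture every $\Gamma_i$, which is also what forces extensible $\Leftrightarrow$ totally extensible), and uses a letter $\alpha$ cycling the components to force the witness to decompose as $w_p\alpha w_s$ with $w_s$ encoding a word in $\bigcap_i L_i$. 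If you want to keep subset synchronization as your source problem you would need a comparably concrete mechanism resolving the two obstructions above; as written, the reduction does not go through.
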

\begin{proof}
To solve one of the problems in NPSPACE, we guess the length of a word $w$ with the required property, and then guess the letters of $w$ from the end.
Of course, we do not store $w$, which may have exponential length, but just keep the subset $S\cdot u^{-1}$, where $u$ is the current suffix of $w$.
The current subset can be stored in $\O(n)$, and since there are $2^n$ different subsets, $|w| \le 2^n$ and the current length also can be stored in $\O(n)$. By Savitch's theorem, the problems are in PSPACE.

For PSPACE-hardness, we construct a reduction from the problem of determining whether an intersection of regular languages given as DFAs is non-empty.
We create one instance for both problems that consists of a strongly connected automaton and a subset $S$ extensible if and only if it is also totally extensible, which is simultaneously equivalent to the non-emptiness of the intersection of the given regular languages.

Let $(\mathcal{D}_i)_{i \in \{1,\ldots,m\}}$ be the given sequence of DFAs with an $i$-th automaton $\mathcal{D}_i=(Q_i,\Sigma,\delta_i,s_i,F_i)$ recognizing a language $L_i$, where $Q_i$ is the set of states, $\Sigma$ is the common alphabet, $\delta_i$ is the transition function, $s_i$ is the initial state, and $F_i$ is the set of final states.
The problem whether there exists a word accepted by all $\mathcal{D}_1,\ldots,\mathcal{D}_m$ (i.e.\ the intersection of $L_i$ is non-empty) is a well known PSPACE-complete problem, called Finite Automata Intersection \cite{Kozen1977}.
We can assume that the DFAs are \emph{minimal}; in particular, they do not have unreachable states from the initial state, otherwise, we may easily remove them in polynomial time.

For each $\mathcal{D}_i$ we choose an arbitrary $f_i \in F_i$.
Let $M = \sum_{i=1}^m |Q_i|$.
We construct the (semi)automaton $\mathcal{D}'=(Q',\Sigma',\delta')$  and define $S \subseteq Q'$ as an instance of our both problems.
The scheme of the automaton is shown in Fig.~\ref{fig:extending_reduction}.

\begin{figure}[htb]\centering
\includegraphics{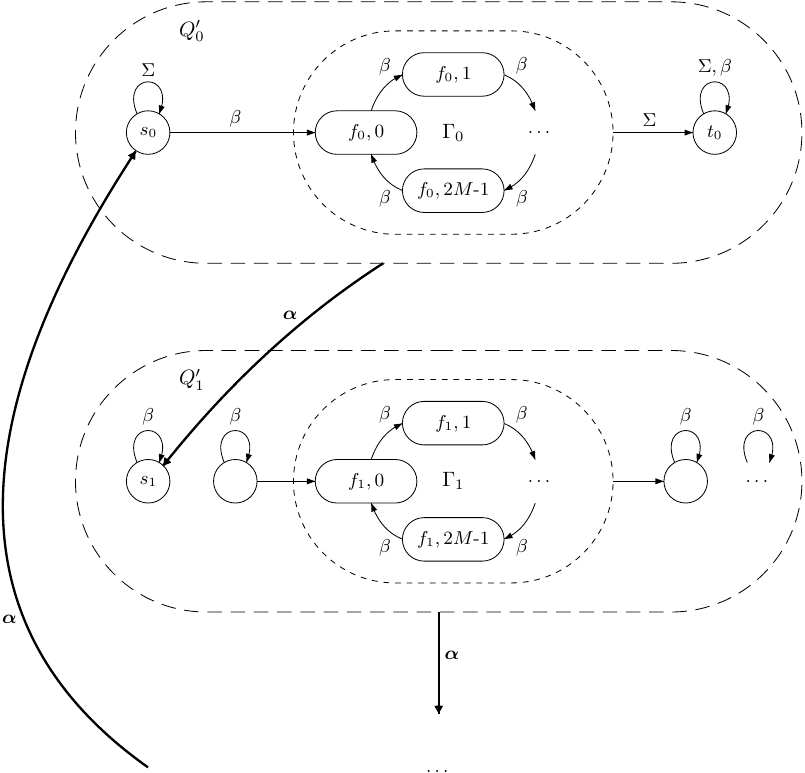}
\vspace{1cm}
\caption{The automaton $\mathcal{D}'$ from the proof of Theorem~\ref{thm:extensible}.} \label{fig:extending_reduction}
\end{figure}

\begin{itemize}
\item For $i \in \{0,1,\ldots,m\}$, let $\Gamma_i = \{f_i\} \times \{0,\ldots,2M-1\}$ be fresh states and let $Q_i' = (Q_i \setminus \{f_i\}) \cup \Gamma_i$.
Let $Q_0'=\{s_0,t_0\} \cup \Gamma_0$, where $s_0$ and $t_0$ are fresh states.
Then $Q' = \bigcup_{i=0}^m Q_i'$.
\item $\Sigma' = \Sigma \cup \{\alpha,\beta\}$, where $\alpha$ and $\beta$ are fresh letters. 
\item $\delta'$ is defined by:
\begin{itemize}
\item For $q \in Q_i \setminus \{f_i\}$ and $a \in \Sigma$, we have
$$\delta'(q,a)=\begin{cases}
\delta_i(q,a) & \text{if } \delta_i(q,a) \neq f_i,\\
(f_i,0) & \text{otherwise}.
\end{cases}$$
\item For $a \in \Sigma$, we have
$$\delta'(t_0,a)=t_0,\quad \delta'(s_0,a)=s_0.$$
\item For $k \in \{0,\ldots,2M-1\}$, $i \in \{1,\ldots,m\}$, and $a \in \Sigma$, we have
\begin{align*}
& \delta'((f_0,k),a) =\ t_0,\\
& \delta'((f_i,k),a) = \begin{cases}
\delta_i(f_i,a) & \text{if } \delta_i( f_i,a) \neq f_i,\\
(f_i,0) & \text{otherwise}.
\end{cases}
\end{align*}
\item For $q \in Q'_i$, we have
$$\delta'(q,\alpha)=s_{(i+1)\bmod(m+1)}.$$
\item For $i \in \{0,\ldots,m\}$ and $k \in \{0,\ldots,2M-1\}$, we have
$$\delta'((f_i,k),\beta) = (f_i,k+1 \bmod 2M).$$
\item We have
$$\delta'(s_0,\beta) = (f_0,0).$$
\item For the remaining states $q \in Q' \setminus (\bigcup_{i=0}^{m} \Gamma_i \cup \{s_0\})$, we have
$$\delta'(q,\beta)=q.$$
\end{itemize}
\item The subset $S \subseteq Q'$ is defined as
$$S=\big(\bigcup\limits_{i=1}^{m} F_{i} \cap Q'\big) \cup \bigcup\limits_{i=0}^{m} \Gamma_i \cup \{s_0\}.$$
\end{itemize}

It is easy to observe that $\mathcal{D}'$ is strongly connected.
Take any $i,j \in \{0,\ldots,m\}$.
We show how to reach any state $q \in Q_j'$ from a state $p \in Q_i'$.
First, we can reach $s_j$ by $\alpha^{(m+1+j-i) \bmod (m+1)}$.
For $j \ge 1$, each state $q \in Q_j' \setminus \big(\Gamma_j \setminus \{(f_j,0)\}\big)$ is reachable from $s_j$, since $\delta'$ restricted to $\Sigma$ acts on $Q_j'$ as $\delta_j$ on $Q_j$ (with $f_j$ replaced by $(f_j,0)$) and $\mathcal{D}_j$ is minimal. 
For $j=0$, states $(f_0,0)$ and $t_0$ are reachable from $s_0$ by the transformations of $\beta$ and $\beta a$ respectively, for any $a\in \Sigma$.
States $q \in \Gamma_j$ can be reached from $(f_j,0)$ using $\delta_\beta$.

We will show the following statements:
\begin{enumerate}
\item[(1)] If $S$ is extensible in $\mathcal{D}'$, then the intersection of the languages $L_i$ is non-empty.
\item[(2)] If the intersection of the languages $L_i$ is non-empty, then $S$ is extensible to $Q'$ in $\mathcal{D}'$.
\end{enumerate}
This will prove that the intersection of the languages $L_i$ is non-empty if and only if $S$ is extensible, which is also equivalent to that $S$ is extensible to  $Q'$.

(1):
Observe that, for each $i \in \{0,\ldots,m\}$, if $(S \cdot w^{-1}) \cap \Gamma_i \neq \emptyset$, then $(S \cdot w^{-1}) \cap \Gamma_i = \Gamma_i$.
This follows by induction: the empty word possesses this property; the transformation $\delta_a$ of $a \in \Sigma \setminus \{\beta\}$ maps every state from $\Gamma_i$ to the same state, so it preserves the property; $\delta_\beta$ acts cyclically on $\Gamma_i$ so also preserves the property.

Suppose that $S$ is extensible by a word $w$.
Notice that, $M$ is an upper bound on the number of states in $Q' \setminus \bigcup_{i=0}^m \Gamma_i$ (for $m \ge 2$).
We also have $|S| \ge 1+(m+1)\cdot 2M$.
We conclude that $\Gamma_i \subseteq S \cdot w^{-1}$ for each $i \in \{0,\ldots,m\}$, since
$$|Q' \setminus \Gamma_i| \le m \cdot 2M + M \le (m+1) \cdot 2M < |S|,$$
so $(S \cdot w^{-1}) \cap \Gamma_i \neq \emptyset$ and then our previous observation $\Gamma_i \subseteq S \cdot w^{-1}$.

Now, the extending word $w$ must contain the letter $\alpha$.
For a contradiction, if $w \in (\Sigma' \setminus \{\alpha\})^*$, then if it contains a letter $a \in \Sigma$, then $S\cdot w^{-1}$ does not contain any state from $\Gamma_0 \cup \{t_0\}$, as the only outgoing edges from this subset are labeled by $\alpha$, $t_0 \notin S$, $\Gamma_0 \cdot \beta^{-1} = \Gamma_0$, and $\Gamma_0 \cdot a^{-1} = \emptyset$.
This contradicts the previous paragraph.
Also, $w$ cannot be of the form $\beta^k$, for $k \in \mathbb{N}$, since $S \cdot \beta^k = S$.
Hence, $w = w_p \alpha w_s$, where $w_p \in (\Sigma')^*$ and $w_s \in (\Sigma' \setminus \{\alpha\})$. 

Note that if $T$ is a subset of $Q'$ such that $T \cap Q'_i = \emptyset$ for some $i$, then also $(T\cdot u^{-1}) \cap Q'_{i'} = \emptyset$ for every word $u$ and some $i'$; because only $\alpha$ maps states $Q_i$ outside $Q_i$, and it acts cyclically on these sets. 
Hence, in this case, every preimage of $T$ does not contain some $\Gamma_{i'}$ set.
So $\{s_i \mid i \in \{0,\cdots,m\}\} \subseteq S \cdot (w_s)^{-1}$, since in the opposite case $\big(S \cdot (\alpha w_s)^{-1} \big) \cap Q'_i = \emptyset$ for some $i$.

Let $w'_s$ be the word obtained by removing all $\beta$ letters from $w_s$.
Note that, for every $i \in \{1,\ldots,m\}$ and every suffix $u$ of $w_s$, we have $(S \cdot u^{-1}) \cap Q'_i = (S \cdot (\beta u)^{-1}) \cap Q'_i$.
Hence, $(S \cdot w_s^{-1}) \cap (Q' \setminus Q'_0) = S \cdot (w'_s)^{-1} \cap (Q' \setminus Q'_0)$.

Now, the word $w'_s$ is in $\Sigma^*$, and $S \cdot w_s^{-1}$ contains $s_i$ for all $i \in \{1,\ldots,m\}$.
Hence, the action of $w'_s$ maps $s_i$ to either a state in $F_i \setminus \{f_i\}$ or $(f_i,0)$, which means that $w'_s$ maps $s_i$ to $F_i$ in $\mathcal{D}_i$.
Therefore, $w'_s$ is in the intersection of the languages $L_i$.

(2):
Suppose that the intersection of the languages $L_i$ is non-empty, so there exists a word $w \in \Sigma^*$ such that $s_i \cdot w \in F_i$ for every $i$.
Then we have $S \cdot (\alpha w)^{-1} = Q'$, thus $S$ is extensible to $Q'$.
\end{proof}

We ensure that both problems remain PSPACE-complete in the case of a~binary alphabet, which follows from the following theorem.
\begin{theorem}\label{thm:extending_binary}
Given an~automaton $\mathrsfs{A}=(Q,\Sigma,\delta)$ and a~subset $S \subseteq Q$, we can construct in polynomial time a~binary automaton $\mathrsfs{A'}=(Q',\{a',b'\},\delta')$ and a~subset $S' \subseteq Q'$ such that:
\begin{enumerate}
\item[(1)] $\mathrsfs{A}$ is strongly connected if and only if $\mathrsfs{A}'$ is strongly connected;
\item[(2)] $S'$ is extensible in $\mathrsfs{A}'$ if and only if $S$ is extensible in $\mathrsfs{A}$;
\item[(3)] $S'$ is totally extensible in $\mathrsfs{A}'$ if and only if $S$ is totally extensible in $\mathrsfs{A}$.
\end{enumerate}
\end{theorem}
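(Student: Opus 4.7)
The plan is to simulate the alphabet $\Sigma = \{a_1, \ldots, a_k\}$ by $\{a', b'\}$ using a cycle-and-apply gadget: take $k$ copies of $Q$, where $b'$ cyclically shifts between copies and $a'$ applies the $(i+1)$-th letter while sitting at copy $i$. Concretely, I would set $Q' = Q \times \{0, 1, \ldots, k-1\}$, $\delta'((q, i), a') = (\delta(q, a_{i+1}), i)$, and $\delta'((q, i), b') = (q, (i+1) \bmod k)$. Define $S' = (S \times \{0\}) \cup (Q \times \{1, \ldots, k-1\})$, so that $Q' \setminus S' = (Q \setminus S) \times \{0\}$ consists of the ``bad'' states sitting in copy $0$ only. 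Since $|Q'| = kn$, this is clearly polynomial.

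The central technical step is an action analysis of an arbitrary $w' \in \{a', b'\}^*$. Tracing $w'$ from $(q, j)$, I would show that the sequence of copies $c_0 = j, c_1, \ldots, c_L$ depends only on $j$ and the $a'/b'$ pattern of $w'$ (because $b'$ shifts the copy independently of the state and $a'$ preserves it), and that at each step where the letter is $a'$ the transition applies $a_{c_t + 1}$ to the $Q$-coordinate. Letting $v_j \in \Sigma^*$ be the resulting word of applied letters, we get $(q, j) \cdot w' = (\delta(q, v_j), c_L)$ with $c_L = (j + \#_{b'}(w')) \bmod k$; crucially, $v_j$ does not depend on $q$. Hence only the unique starting copy $j^* = (-\#_{b'}(w')) \bmod k$ can land at copy $0$, giving
\[
(Q' \setminus S') \cdot (w')^{-1} = \bigl((Q \setminus S) \cdot v_{j^*}^{-1}\bigr) \times \{j^*\}.
\]

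Using this identity, (2) and (3) transfer directly (recalling from Remark~2 that extending equals shrinking the complement). Backward: if $w'$ extends or totally extends $S'$ in $\mathrsfs{A}'$, then the word $v_{j^*} \in \Sigma^*$ does the same for $S$ in $\mathrsfs{A}$. Forward: given an (totally) extending word $w = a_{i_1} \cdots a_{i_l}$ in $\mathrsfs{A}$, I would take $w' = \prod_{r=1}^{l} (b')^{i_r - 1} a' (b')^{k - i_r + 1}$, which has $\#_{b'}(w') \equiv 0 \pmod k$, so $j^* = 0$ and $v_0 = w$. For (1), the same gadget gives strong connectivity: from $(q, i)$ to $(p, j)$, cycle via $b'$ to copy $j$ and then simulate any $\mathrsfs{A}$-word from $q$ to $p$ while staying on copy $j$; conversely any $\mathrsfs{A}'$-path from $(q, 0)$ to $(p, 0)$ yields via $v_0$ an $\mathrsfs{A}$-path from $q$ to $p$. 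The main obstacle is choosing $S'$: a naive $S' = S \times \{0\}$ fails because the non-zero copies contribute spurious preimage under $a'$ that makes $S'$ extensible even when $S$ is not; absorbing all of $Q \times \{1, \ldots, k-1\}$ into $S'$ is the key idea that confines $(Q' \setminus S') \cdot (w')^{-1}$ to a single copy and thereby reduces extensibility in $\mathrsfs{A}'$ cleanly to extensibility in $\mathrsfs{A}$.
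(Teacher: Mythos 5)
Your proposal is correct and is essentially the paper's own construction: the same product automaton $Q\times\{0,\dots,k-1\}$ with $b'$ cycling the copy index and $a'$ applying the letter indexed by the current copy, the same choice $S'=(S\times\{0\})\cup(Q\times\{1,\dots,k-1\})$ so that the complement lives in a single copy, and the same letter-by-letter encoding $(b')^{i-1}a'(b')^{k-i+1}$ for the forward direction. The only differences are cosmetic (indexing the copies by integers rather than by letters, and phrasing the backward direction via the preimage of the complement rather than of $U_0$ directly).
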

\begin{proof}
Let $\Sigma = \{a_0,\ldots,a_{k-1}\}$.
The idea is as follows:
We reduce $\mathrsfs{A}$ to a~binary automaton $\mathrsfs{A}'$ that consists of $k$ copies of $\mathrsfs{A}$.
The first letter $a$ acts in an $i$-th copy as the letter $a_i$ in $\mathrsfs{A}$.
The second letter $b$ acts cyclically on these copies.
Then we define $S'$ to contain states from $S$ in the first copy and all states from the other copies.
The construction is shown in Fig.~\ref{fig:extending_binarization}.

\begin{figure}[htb]\centering
\includegraphics{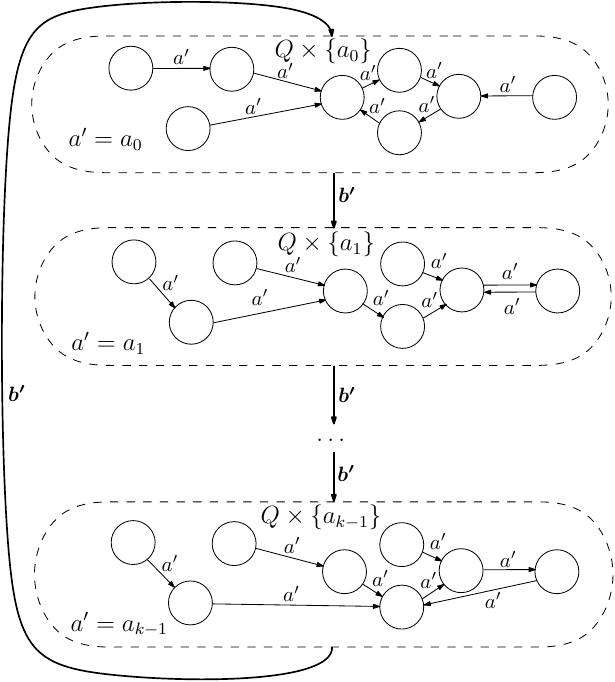}
\caption{The binary automaton $\mathcal{A}'$ from the proof of Theorem~\ref{thm:extending_binary}.}\label{fig:extending_binarization}
\end{figure}

We construct $\mathrsfs{A}'=(Q',\{a',b'\},\delta')$ with $Q' = Q \times \Sigma$ and $\delta'$ defined as follows:
$\delta'((q,a_i),a') = (\delta(q,a_i),a_i)$, and
$\delta'((q,a_i),b') = (q,a_{(i+1) \bmod k})$.
Clearly, $\mathrsfs{A}'$ can be constructed in $\O(nk)$ time, where $k=|\Sigma|$.

(1): Suppose that $\mathrsfs{A}$ is strongly connected; we will show that $\mathrsfs{A}'$ is also strongly connected.
Let $(q_1,a_i)$ and $(q_2,a_j)$ be any two states of $\mathrsfs{A}'$.
In $\mathrsfs{A}$, there is a~word $w$ such that $q_1\cdot w=q_2$.
Let $w'$ be the word obtained from $w$ by replacing every letter $a_h$ by the word $(b')^{h}a'(b')^{k-h}$.
Note that in $\mathrsfs{A}'$ we have
$$(p,a_0)\cdot (b')^{h}a'(b')^{k-h} = (p\cdot a_h,a_0),$$
hence $(q_1,a_0)\cdot w' = (q_1 \cdot w,a_0)$.
Then the action of the word $(b')^{k-i} w' (b')^{j}$ maps $(q_1,a_i)$ to $(q_2,a_j)$.

Conversely, suppose that $\mathrsfs{A}'$ is strongly connected, so every $(q_1,a_i)$ can be mapped to every $(q_2,a_j)$ by the action of a~word $w'$.
Then
$$w' = (b')^{h_1}a' \dots (b')^{h_{m-1}}a' (b')^{h_{m}},$$
for some $m \ge 1$ and $h_1,\ldots,h_m \ge 0$.
We construct $w$ of length $m-1$, where the $s$-th letter is $a_r$ with $r=(i+\Sigma_{j=1}^s h_{j}) \bmod k$. 
Then $w$ maps $q_1$ to $q_2$ in $\mathrsfs{A}$.

(2)~and~(3):
For $i \in \{0,\ldots,k-1\}$ we define $U_i = (Q \times \{\Sigma \setminus \{a_i\}\})$.
Observe that for any word $u' \in \{a',b'\}^*$, we have $U_i\cdot (u')^{-1} = U_j$ for some $j$, which depends on $i$ and the number of letters $b'$ in $u'$.

We define
$$S' = (S \times \{a_0\}) \cup U_0.$$
Suppose that $S$ is extensible in $\mathrsfs{A}$ by a~word $w$, and let $w'$ be the word obtained from $w$ as in~(1).
Then $(w')^{-1}$ maps $U_0$ to $U_0$, and $(S \times \{a_0\})$ to $(S \cdot w^{-1}) \times \{a_0\})$.
We have:
$$S' (w')^{-1} = ((S\cdot w^{-1}) \times \{a_0\}) \cup U_0,$$
and since $|S\cdot w^{-1}|>|S|$, this means that $w'$ extends $S'$.
By the same argument, if $w$ extends $S$ to $Q$, then $w'$ extends $S'$ to $Q'$.

Conversely, suppose that $S'$ is extensible in $\mathrsfs{A}'$ by a~word $w'$, and let $w$ be the word obtained from $w'$ as in~(1).
Then, for some $i$, we have
$$S'\cdot (w')^{-1} = ((S\cdot w^{-1}) \times \{a_i\}) \cup U_i,$$
and since $|U_0|=|U_i|$ it must be that $|S\cdot w^{-1}|>|S|$.
Also, if $S'\cdot (w')^{-1} = Q'$ then $S\cdot w^{-1}=Q$.
\end{proof}

Now, we consider the subclass of synchronizing automata.
We show that synchronizability does not change the complexity of the first problem, whereas the second problem becomes much easier.

\begin{proposition}\label{pro:extensible_synchro}
When the automaton is binary and synchronizing, Problem~\ref{pbm:extensible} remains PSPACE-complete.
\end{proposition}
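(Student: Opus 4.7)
Membership in PSPACE is inherited from Theorem~\ref{thm:extensible}, so only hardness needs to be addressed. My plan is to adapt the reduction from the \emph{Finite Automata Intersection} problem used in the proof of Theorem~\ref{thm:extensible}, first augmenting the constructed automaton so that it is synchronizing, and then binarizing via a modified version of Theorem~\ref{thm:extending_binary} that preserves synchronizability in addition to extensibility.

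The synchronization step is direct: to $\mathcal{D}'$ from the proof of Theorem~\ref{thm:extensible} I would attach a fresh sink state $r\notin S$ absorbing under every existing letter, together with a new letter $\gamma$ mapping every state to $r$. The augmented automaton is trivially synchronizing, with $\gamma$ itself a reset word. The extensibility argument of Theorem~\ref{thm:extensible} carries over: because $r$ is absorbing and $r\notin S$, no preimage $S\cdot w^{-1}$ ever contains $r$, hence any word using $\gamma$ collapses its preimage to $\emptyset$ and cannot extend $S$. Extending words therefore lie in $(\Sigma')^{\ast}$, and the equivalence between extensibility of $S$ and non-emptiness of the intersection survives.

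For the binarization, Theorem~\ref{thm:extending_binary} by itself does not preserve synchronizability: the cyclic action of $b'$ on the indicator coordinate admits no binary reset word. To repair this, I would encode the sink $r$ as a single distinguished state $\bar r$ in the binarized automaton (rather than a full slice $\{r\}\times\Sigma$), have both binary letters fix $\bar r$, and have $a'$ jump directly to $\bar r$ whenever the simulated original transition reaches $r$. A short computation then shows that $(b'a')^{k}$ is a reset word: during $k$ iterations every indicator value is used by some $a'$ on every starting state, and at the indicator value equal to the index of $\gamma$ the state is absorbed into $\bar r$ and stays there.

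The main obstacle will be to choose the subset $S'$ so that the extensibility correspondence survives both modifications. With the straightforward choice $S'=(S\times\{a_0\})\cup U_0$ from Theorem~\ref{thm:extending_binary} (with $\bar r\notin S'$), the naive binary translation of an extending word $w\in(\Sigma')^{\ast}$ may route several slices to $\bar r$, because for starting indicators $a_i$ with $i\neq0$ the traced letter sequence can contain $\gamma$ even though $w$ does not; the resulting slice loss must be balanced against the gain in the $a_0$-slice. I plan to handle this by a combination of reindexing $\Sigma'$ to place $\gamma$ at an alphabet position that interferes minimally with the intended extension, and adjusting $S'$ so that the potentially ``sacrificed'' non-$a_0$ slices are absorbed; a slice-by-slice count mirroring the one in the proof of Theorem~\ref{thm:extending_binary} then identifies extensions of $S'$ in the binary synchronizing automaton with extensions of $S$ in $\mathcal{D}'$, completing the reduction.
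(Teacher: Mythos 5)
Your first step (adjoin a sink $r\notin S$ and a reset letter $\gamma$) matches the paper's opening idea, and your observation that the product binarization of Theorem~\ref{thm:extending_binary} cannot preserve synchronizability is correct. But the proof is not complete: the binarization is exactly where the argument must do real work, and you leave it as a ``plan.'' The difficulty you name is genuine and, along the route you propose, quantitatively severe. Once $a'$ sends the whole $\gamma$-slice $Q'\times\{\gamma\}$ to the single state $\bar r$, the invariant that drives the proof of Theorem~\ref{thm:extending_binary} --- that the preimage of a co-slice $U_i$ under any word is again a co-slice $U_j$ of the same cardinality --- fails: $U_0\cdot(a')^{-1}$ already loses an entire slice of $|Q'|$ states, which swamps the gain of at least $1$ coming from extending $S$ in the $a_0$-slice. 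Reindexing $\gamma$ or carving the $\gamma$-slice out of $S'$ does not obviously help, because the letters $b'$ rotate the ``hole'' to arbitrary positions before each $a'$ is applied, so the loss incurred depends on the word in a way your slice-by-slice count would have to control; no such count is given. As written, the reduction is not established.

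The paper sidesteps this entirely by not routing the reset letter through the product construction. It first obtains a \emph{binary} strongly connected hard instance (Theorem~\ref{thm:extensible} plus Theorem~\ref{thm:extending_binary}), and only then applies a tree-like binarization: each state $q_i$ gets children $q_i^a,q_i^b$; the block $a c$ simulates the original letter $c$, while any block starting with $b$ falls into a fresh sink $z$. This single construction is simultaneously binary and synchronizing (e.g.\ $b$ followed by anything resets to $z$), and since $z\notin S$ is absorbing, preimages of $S$ never meet the killed branch, so extensibility transfers cleanly without any slice accounting. If you want to salvage your approach, the fix is to replace your modified product binarization by such a tree encoding in which one branch of the selector letter realizes the reset.
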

\begin{proof}
From Theorem~\ref{thm:extensible}, Problem~\ref{pbm:extensible} is in PSPACE, as the algorithm works the same in the restricted case.

Problem~\ref{pbm:extensible} for binary and synchronizing automata is PSPACE-hard, as any general instance with a binary automaton $\mathrsfs{A}=(Q,\{a,b\},\delta)$ can be reduced to an equivalent instance with a binary synchronizing automaton $\mathrsfs{A}'$.
For this, we just add a~sink state $s$ and a~letter which synchronizes $Q$ to $s$.
Additionally, a~standard tree-like binarization is suitably used to obtain a~binary automaton $\mathrsfs{A}'$.

Formally, we construct a synchronizing binary automaton $\mathrsfs{A}'$ from the binary automaton $\mathrsfs{A}$ as follows.
We can assume that $Q = \{q_1,\ldots,q_n\}$.
Let $s$ be a~fresh state.
Let $Q' = Q \cup \{q^a_1,\ldots,q^a_n\}$.
We construct $\mathrsfs{A}'=(Q' \cup \{s\},\{a,b\},\delta')$, where $\delta'$ for all $i$ is defined as follows:
$\delta'(q_i,a)=q^a_i$, $\delta'(q_i,b)=s$, $\delta'(q^a_i,a)=\delta(q,a)$, and $\delta'(q^a_i,b) = \delta(q,b)$.
Then $bb$ is a synchronizing word for $\mathrsfs{A}'$, and each $S \subseteq Q$ is extensible in $\mathrsfs{A}'$ if and only if it is extensible in $\mathrsfs{A}$.
\end{proof}

\begin{theorem}\label{thm:totallyextensible_synchro}
When the automaton is synchronizing, Problem~\ref{pbm:totallyextensible} can be solved in $\O(|\Sigma|n)$ time and it is NL-complete.
\end{theorem}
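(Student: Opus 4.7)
The plan is to first establish the clean characterization that $S$ is totally extensible if and only if every state of $Q$ can reach some state of $S$ (by some word, possibly depending on the starting state). One direction is immediate: if $Q\cdot w\subseteq S$, then every $q\in Q$ reaches $q\cdot w\in S$. The converse uses synchronization crucially: pick any reset word $r$ with $Q\cdot r=\{q_0\}$; by assumption $q_0$ reaches some $q_1\in S$ via a word $u$; then $Q\cdot ru=\{q_0\cdot u\}=\{q_1\}\subseteq S$, so $ru$ totally extends $S$.

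Given this characterization, the $\O(|\Sigma|n)$ algorithm is immediate: build the reverse transition graph (with $n$ vertices and at most $|\Sigma|n$ edges) and perform a single BFS or DFS starting from $S$; accept if and only if the whole state set $Q$ is visited. This determines precisely which states can reach $S$ in the original automaton, and by the characterization we need this to be all of $Q$.

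For NL-completeness, membership in NL follows from the characterization: for fixed $q$, checking whether some state of $S$ is reachable from $q$ is in NL (standard graph reachability), and the universal form ``for every $q\in Q$'' yields a coNL problem, which coincides with NL by the Immerman--Szelepcs\'enyi theorem. For NL-hardness, I reduce from $s$-$t$ reachability in directed graphs. Given an instance $(G=(V,E),s,t)$, I construct the automaton with states $V\cup\{z\}$ (with $z$ a fresh sink) and alphabet $E\cup\{r\}$: the letter $r$ maps every state to $s$, and the letter for an edge $(u,v)\in E$ maps $u$ to $v$ and every other state to $z$. The letter $r$ alone is a reset word, so the automaton is synchronizing. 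Setting $S=\{t\}$, I verify that $t$ is reachable from $s$ in $G$ if and only if $t$ is reachable from every state of the automaton.

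The main obstacle is the correctness of this last equivalence: a walk in the automaton ending at $t$ decomposes at the occurrences of $r$, and the final segment, which starts at $s$ (or at the initial state if $r$ is never used) and uses only edge-letters, must avoid the irrecoverable sink $z$ and therefore trace a genuine $G$-path to $t$. Conversely, a $G$-path from $s$ to $t$ can be replayed directly in the automaton, and from any state one may first apply $r$ to reach $s$. Combined with the characterization, this yields both directions of the reduction.
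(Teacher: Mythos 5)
Your proposal is correct and follows essentially the same route as the paper: a reachability characterization of total extensibility in synchronizing automata (the paper phrases it as ``some state of $S$ is reachable from every state,'' justified via the unique sink component, which is equivalent to your $\forall\exists$ form), a linear-time graph search (Tarjan's algorithm on the sink component in the paper versus your reverse BFS from $S$), and an NL-hardness reduction from ST-connectivity using a reset letter to $s$ and $S=\{t\}$. The only cosmetic differences are your per-edge alphabet with a junk state $z$ versus the paper's out-degree coloring of edges, and your appeal to Immerman--Szelepcs\'enyi for NL membership where the paper's $\exists\forall$ phrasing lets it simply guess $q\in S$ and verify reachability.
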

\begin{proof}
Since $\mathrsfs{A}$ is synchronizing, Problem~\ref{pbm:totallyextensible} reduces to checking whether there is a~state $q \in S$ reachable from every state:
It is well known that a~synchronizing automaton has precisely one strongly connected \emph{sink} component that is reachable from every state.
If $w$ is a~reset word that synchronizes $Q$ to $p$, and $u$ is such that $p \cdot u = q$, then $wu$ extends $\{q\}$ to $Q$.
If $S$ does not contain a~state from the sink component, then every preimage of $S$ also does not contain these states.

The problem can be solved in $\O(|\Sigma|n)$ time, since the states of the sink component can be determined in linear time by Tarjan's algorithm \cite{Tarjan1972}.

It is also easy to see that the problem is in NL:
Guess a~state $q \in S$ and verify in logarithmic space that it is reachable from every state.

For NL-hardness, we reduce from ST-connectivity: Given a~graph $G=(V,E)$ and vertices $s,t$, check whether there is a~path from $s$ to $t$.
We will output a~synchronizing automaton $\mathrsfs{A}=(V,\Sigma,\delta)$ and $S \subseteq Q$ such that $S$ is extensible to $Q$ if and only if there is a~path from $s$ to $t$ in $G$.

First, we compute the maximum out-degree of $G$, and set $\Sigma = \Sigma' \cup \{\alpha\}$, where $|\Sigma'|$ is equal to the maximum out-degree.
We output $\mathrsfs{A}$ such that for every $q \in V$, every edge $(q,p) \in E$ is colored by a~different letter from $\Sigma'$.
If there is no outgoing edge from $q$, then we set the transitions of all letters from $\Sigma'$ to be loops.
If the out-degree is smaller than $|\Sigma'|$, then we simply repeat the transition of the last letter.
Next, we define $\delta(q,\alpha) = s$ for every $q \in V$.
Finally, let $S = \{t\}$.
The reduction uses logarithmic space since it requires only counting and enumerating through $V$ and $\Sigma'$.
The produced automaton $\mathrsfs{A}$ is synchronizing just by $\alpha$.

Suppose that there is a~path from $s$ to $t$.
Then there is a~word $w$ such that $\delta(s,w)=t$, and so $\{t\} \cdot (\alpha w)^{-1} = Q$.

Suppose that $\{t\}$ is extensible to $Q$ by some word $w$.
Let $w'$ be the longest suffix of $w$ that does not contain $\alpha$.
Since $\alpha^{-1}$ results in $\emptyset$ for any subset not containing $s$, it must be that $s \in \{t\} (w')^{-1}$.
Hence $\delta(s,w') = t$, and the path labeled by $w'$ is the path from $s$ to $t$ in $G$.
\end{proof}

Note that in the case of strongly connected synchronizing automaton, both problems have a~trivial solution, since every non-empty proper subset of $Q$ is totally extensible (by a~suitable reset word); thus they can be solved in constant time, assuming that we can check the size of the given subset and the number of states in constant time.

\subsection{Bounded word length}\label{subsec:extensible_short}

We turn our attention to the variants in which an~upper bound on the length of word $w$ is also given.

\begin{problem}[Extensible subset by short word]\label{pbm:extensible_len}
Given $\mathrsfs{A}=(Q,\Sigma,\delta)$, a~subset $S \subseteq Q$, and an~integer $\ell$ given in binary representation, is $S$ extensible by a~word of length at most $\ell$?
\end{problem}

\begin{problem}[Totally extensible subset by short word]\label{pbm:totallyextensible_len}
Given $\mathrsfs{A}=(Q,\Sigma,\delta)$, a~subset $S \subseteq Q$, and an~integer $\ell$ given in binary representation, is $S$ totally extensible by a~word of length at most $\ell$?
\end{problem}

Obviously, these problems remain PSPACE-complete (also when the automaton is strongly connected and binary), as we can set $\ell=2^n$, which bounds the number of different subsets of $Q$.
In this case, both the problems are reduced respectively to Problem~\ref{pbm:extensible} and Problem~\ref{pbm:totallyextensible}.

When the automaton is synchronizing, Problem~\ref{pbm:totallyextensible_len} is NP-complete, which will be shown in Corollary~\ref{cor:totallyextending_large_len_sych}.
Of course, Problem~\ref{pbm:extensible_len} remains PSPACE-complete for a~synchronizing automaton by the same argument as in the general case.

\section{Extending small subsets}\label{sec:extending_small}

The complexity of the extending problems is caused by an~unbounded size of the given subset.
Note that in the proof of PSPACE-hardness in Theorem~\ref{thm:extensible} the used subsets and simultaneously their complements may grow with an~instance of the reduced problem, and it is known that the problem of the emptiness of intersection can be solved in polynomial time if the number of given DFAs is fixed.
Here, we study the computational complexity of the extending problems when the size of the subset is not larger than a~fixed $k$.

\subsection{Unbounded word length}

\begin{problem}[Extensible small subset]\label{pbm:extensible_small}
For a~fixed $k \in \mathbb{N} \setminus \{0\}$, given $\mathrsfs{A}=(Q,\Sigma,\delta)$ and a~subset $S \subseteq Q$ with $|S| \le k$, is $S$ extensible?
\end{problem}

\begin{proposition}\label{pro:extensible_small}
Problem~\ref{pbm:extensible_small} can be solved in $\O(|\Sigma|n^k)$ time.
\end{proposition}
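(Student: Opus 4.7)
The plan is to run a breadth-first search in the graph whose vertices are the subsets of $Q$ of cardinality at most $|S|$, taking $S$ itself as the start vertex and placing an edge $T \to T \cdot a^{-1}$ for each $a \in \Sigma$. At every dequeued subset $T$ and every letter $a$, I compute $T \cdot a^{-1}$: if its cardinality exceeds $|S|$, the algorithm halts and answers YES; otherwise $T \cdot a^{-1}$ is enqueued if not already visited. If BFS terminates without ever producing a preimage of size larger than $|S|$, the algorithm answers NO.

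Soundness is immediate: every BFS vertex $T$ equals $S \cdot u^{-1}$ for the word $u$ read along the BFS path from $S$ to $T$, so the detection $|T \cdot a^{-1}| > |S|$ exhibits the extending word $au$. Completeness, which is the main step, is handled by a ``crossing step'' argument. Given an extending word $w = a_1 a_2 \cdots a_\ell$, I would examine the chain of suffix preimages $S \cdot (a_i a_{i+1} \cdots a_\ell)^{-1}$ as $i$ decreases from $\ell+1$ down to $1$: consecutive terms differ by one preimage operation, the first term is $S$ itself, and the last term has cardinality greater than $|S|$. Hence there is a smallest index $j$ where the size first exceeds $|S|$. All terms with larger index have size at most $|S|$, so they lie in the BFS state space and are visited in order; upon reaching $S \cdot (a_{j+1} \cdots a_\ell)^{-1}$ and applying the preimage of letter $a_j$, BFS observes a subset of size strictly greater than $|S|$ and reports YES.

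For the running time, the number of subsets of size at most $|S| \le k$ is $\sum_{i=0}^{k} \binom{n}{i} = \O(n^k)$, and for each of the $|\Sigma|$ letters the preimage $T \cdot a^{-1}$ can be assembled from the precomputed inverse transition lists, aborting as soon as more than $|S|$ distinct predecessors have been accumulated; this yields the claimed bound $\O(|\Sigma| n^k)$. The principal difficulty I anticipate is not the algorithm itself but the completeness argument: truncating BFS to subsets of cardinality $\le |S|$ is not a priori safe, and its correctness relies on the crossing-step observation that every extending word admits a suffix along which the preimage size first rises above $|S|$ via a single letter step.
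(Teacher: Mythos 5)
Your proof is correct. It takes the reversed direction of the paper's search: the paper builds the forward $k$-subset automaton $\mathrsfs{A}^{\le k}$ with image transitions $A \mapsto A\cdot a$, marks $A$ as initial when $|A\cdot a^{-1}|>|S|$ for some $a$ and as final when $A\subseteq S$, and tests reachability of a final state from an initial one, whereas you run a backward BFS on exact preimages $T\mapsto T\cdot a^{-1}$ starting from $S$, truncated to subsets of size at most $|S|$, halting as soon as one step overshoots $|S|$. Both searches range over the same $\O(n^k)$ bounded-size subsets and give the claimed $\O(|\Sigma|n^k)$ bound, and both rest on the same crossing-step fact about the suffix preimages of an extending word; the paper leaves that fact implicit behind the unproved sentence ``a final state can be reached from an initial state if and only if $S$ is extensible'', while you spell it out, which is the real content either way. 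A small bonus of your direction is that the BFS depth of the first detection immediately yields a shortest extending word, which is what is needed for Problem~\ref{pbm:extensible_small_len}. Two wording slips that do not affect correctness: the crossing index $j$ you want is the \emph{largest} index with $|S\cdot(a_j\cdots a_\ell)^{-1}|>|S|$ (the first one encountered as $i$ decreases), not the smallest; and the word $u$ with $T=S\cdot u^{-1}$ is the reverse of the sequence of letters read along the BFS path, so that the witness is indeed $au$ as you state.
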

\begin{proof}
We build the $k$-subsets automaton $\mathrsfs{A}^{\le k}=(Q^{\le k},\Sigma,\delta^{\le k}, S_0, F)$, where $Q^{\le k}=\{A \subseteq Q \colon |A| \le k\}$ and $\delta^{\le k}$ is naturally defined by the image of $\delta$ on a~subset.
Let the set of initial states be $I=\{A \in Q^{\le k} \colon |A \cdot a^{-1}| > |S| \text{ for some } a \in \Sigma\}$, and the set of final states be the set of all subsets of $S$.
A~final state can be reached from an~initial state if and only if $S$ is extensible in $\mathrsfs{A}$.
We can simply check this condition by a BFS algorithm.

Note that we can compute whether a~subset $A$ of size at most $k$ is in $I$ in $\O(|\Sigma|)$, by summing the sizes $|q\cdot a^{-1}|$ for all $q \in A$, where $|q\cdot a^{-1}|$ are computed during a~preprocessing, which takes $O(n)$ time for a~single $a \in \Sigma$.
Also, for a~given subset $A$ of size at most $k$, we can compute $T\cdot a$ in constant time (which depends only $k$).
Hence, the BFS works in linear time in the size of $\mathrsfs{A}^{\le k}$, so in $O(|\Sigma|n^k)$ time.
\end{proof}

\begin{problem}[Totally extensible small subset]\label{pbm:totallyextensible_small}
For a~fixed $k \in \mathbb{N} \setminus \{0\}$, given $\mathrsfs{A}=(Q,\Sigma,\delta)$ and a~subset $S \subseteq Q$ with $|S| \le k$, is $S$ totally extensible?
\end{problem}

For $k=1$, Problem~\ref{pbm:totallyextensible} is equivalent to checking if the automaton is synchronizing to the given state, thus can be solved in $\O(|\Sigma|n^2)$ time.
For larger $k$ we have the following:

\begin{proposition}\label{pro:totallyextensible_small}
Problem~\ref{pbm:totallyextensible_small} can be solved in $\O(|\Sigma|n^k + n^3)$ time.
\end{proposition}
\begin{proof}
Let $u$ be a~word of the minimal rank in $\mathrsfs{A}$.
We can find such a~word and compute the image $Q\cdot u$ in $\O(n^3+|\Sigma|n^2)$ time, using the well-known algorithm \cite[Algorithm~1]{Ep1990} generalized to non-synchronizing automata.
The algorithm just stops when there are no more compressible pairs of states contained in the current subset, and since the subset cannot be further compressed, the found word has the minimal rank.

For each $w \in \Sigma^*$ we have $S\cdot w^{-1}=Q$ if and only if $Q \cdot w \subseteq S$.
We can meet the required condition for $w$ if and only if $(Q\cdot u) \cdot w \subseteq S$.
Surely $|(Q\cdot u) \cdot w|=|Q\cdot u|$.
The desired word does not exist if the minimal rank is larger than $|S|=k$.
Otherwise, we can build the subset automaton $\mathrsfs{A}^{\le |Q \cdot u|}$ (similarly as in the proof of Proposition~\ref{pro:extensible_small}).
The initial subset is $Q \cdot u$.
If some subset of $S$ is reachable by a~word $w$, then the word $uw$ totally extends $S$ in $\mathrsfs{A}$.
Otherwise, $S$ is not totally extensible.
The reachability can be checked in at most $\O(|\Sigma| n^k)$ time.
However, if the rank $r$ of $u$ is less than $k$, the algorithm takes only $\O(|\Sigma|n^r)$ time.
\end{proof}

\subsection{Bounded word length}

We also have the two variants of the above problems when an~upper bound on the length of the word is additionally given.

\begin{problem}[Extensible small subset by short word]\label{pbm:extensible_small_len}
For a~fixed $k \in \mathbb{N} \setminus \{0\}$, given $\mathrsfs{A}=(Q,\Sigma,\delta)$, a~subset $S \subseteq Q$ with $|S| \le k$, and an~integer $\ell$ given in binary representation, is $S$ extensible by a word of length at most $\ell$?
\end{problem}

Problem~\ref{pbm:extensible_small_len} can be solved by the same algorithm in a Proposition~\ref{pro:extensible_small}, since the procedure can find a~shortest extending word.

\begin{problem}[Totally extensible small subset by short word]\label{pbm:totallyextensible_small_len}
For a~fixed $k \in \mathbb{N} \setminus \{0\}$, given $\mathrsfs{A}=(Q,\Sigma,\delta)$, a~subset $S \subseteq Q$ with $|S| \le k$, and an~integer $\ell$ given in binary representation, is $S$ totally extensible by a word of length at most $\ell$?
\end{problem}

\begin{proposition}\label{pro:totallyextensible_small_len}
For every $k$, Problem~\ref{pbm:totallyextensible_small_len} is NP-complete, even if the automaton is simultaneously strongly connected, synchronizing, and binary.
\end{proposition}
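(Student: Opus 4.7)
The plan is to prove membership in NP and NP-hardness separately. For NP membership, I would observe that by the image-subsets argument used in the proof of Proposition~\ref{pro:totallyextensible_small}, whenever a totally extending word for $S$ with $|S| \le k$ exists, a shortest one has length $\O(n^k)$, i.e., polynomial in $n$ for every fixed $k$. Consequently, a nondeterministic polynomial-time algorithm guesses a word $w$ of length at most $\min(\ell, Cn^k)$ for a suitable constant $C$ and verifies by direct simulation that $S \cdot w^{-1} = Q$.

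For NP-hardness I would concentrate on the case $k = 1$: any hardness result for singleton $S$ automatically transfers to every fixed $k \ge 1$, since every singleton instance already satisfies $|S| \le k$. I would reduce from the problem of deciding whether a strongly connected binary synchronizing automaton admits a reset word of length at most $\ell$, which is NP-complete by~\cite{Vorel2017ComplexityEulerian}. Given such an instance $(\mathrsfs{B}, \ell)$, the goal is to produce a strongly connected, binary, synchronizing automaton $\mathrsfs{A}$, a distinguished state $s$, and a bound $\ell'$ so that $\mathrsfs{B}$ has a reset word of length at most $\ell$ if and only if there is a word of length at most $\ell'$ with $\{s\} \cdot w^{-1} = Q$, i.e.\ synchronizing $\mathrsfs{A}$ to $s$.

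To realize the reduction I would attach to $\mathrsfs{B}$ a small \emph{funnel gadget}: a short chain of auxiliary states ending in $s$, together with modifications of the transition function that channel every possible image $Q\cdot w$ of a reset word of $\mathrsfs{B}$ through this chain into $s$. The gadget must be designed to preserve strong connectivity (each auxiliary state is both reachable from and able to reach the originals, for instance by routing one letter back into $\mathrsfs{B}$ in the style of the $\alpha$ letter in the proof of Theorem~\ref{thm:extensible}), to preserve synchronizability, and to keep the alphabet binary (using the standard binarization from Theorem~\ref{thm:extending_binary} if needed). The length offset $\Delta$ introduced by routing through the gadget must be a fixed polynomial in $n$, so that setting $\ell' = \ell + \Delta$ gives the tight correspondence: a reset word of length $L$ in $\mathrsfs{B}$ yields a synchronizing-to-$s$ word of length $L + \Delta$ in $\mathrsfs{A}$, and conversely.

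The main obstacle I anticipate is ensuring that the gadget introduces no shortcuts, i.e.\ that the only way to bring every state of $\mathrsfs{A}$ to $s$ is first to bring every original state of $\mathrsfs{B}$ to a single state and then route through the gadget. This requires a combinatorial argument showing that every word synchronizing $\mathrsfs{A}$ to $s$ contains as a factor a reset word of $\mathrsfs{B}$, together with an explicit bound on the suffix length. An alternative route, which bypasses the gadget entirely, is to inspect the hard instances produced in~\cite{Vorel2017ComplexityEulerian} and verify that their shortest reset words synchronize to a canonical state; if so, one may simply take $\mathrsfs{A} = \mathrsfs{B}$, let $S$ be that canonical state, and set $\ell' = \ell$.
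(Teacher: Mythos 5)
Your ``alternative route'' mentioned in the last sentence is precisely the paper's proof, and the condition you hedge on does hold: by \cite[Lemma~2.4]{Vorel2017ComplexityEulerian}, in the hard instances of that construction every reset word of the prescribed length maps all states to one particular state $s_2$, so one simply takes $\mathrsfs{A}=\mathrsfs{B}$, $S=\{s_2\}$, and $\ell'=\ell$, with no gadget at all; those instances are already Eulerian, strongly connected, synchronizing, and binary, so all subclass constraints come for free. Your primary, gadget-based route is where the real work would lie, and as you yourself note it is not carried out: attaching a funnel chain and rerouting transitions must not create shortcuts, must preserve strong connectivity and synchronizability, and must keep the length correspondence exact after binarization. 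That is a genuinely delicate construction --- compare the proof of Theorem~\ref{thm:avoiding_length_NP-c} in this paper, which builds exactly such gadgets for the avoiding problem and needs a lengthy case analysis to rule out shortcuts. So as written the main branch of your argument has an unfilled gap, and it is only the fallback branch that actually closes the proof.

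On NP membership your argument is right in substance, but the length bound should be $\O(n^3+n^k)$ rather than $\O(n^k)$: the procedure of Proposition~\ref{pro:totallyextensible_small} first applies a word $u$ of minimal rank, of length $\O(n^3)$, and only then a word of length $\O(n^k)$ found in the subset automaton, so for $k\le 2$ the cubic term dominates. This does not affect the conclusion, since the bound is polynomial for every fixed $k$ and guess-and-verify goes through.
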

\begin{proof}
The problem is in NP, as the shortest extending words have length at most $\O(n^3+n^k)$ (since words of this length can be found by the procedure from Proposition~\ref{pro:totallyextensible_small}).

When we choose $S$ of size $1$, the problem is equivalent to finding a~reset word that maps every state to the state in $S$.
In~\cite{Vorel2017ComplexityEulerian} it has been shown that for Eulerian automata that are simultaneously strongly connected, synchronizing, and binary, deciding whether there is a~reset word of length at most $\ell$ is NP-complete. Moreover, in this construction, if there exists a~reset word of this length, then it maps every state to one particular state $s_2$ (see~\cite[Lemma~2.4]{Vorel2017ComplexityEulerian}).
Therefore, we can set $S=\{s_2\}$, and thus Problem~\ref{pbm:totallyextensible_small_len} is NP-complete.
\end{proof}

\section{Extending large subsets}\label{sec:extending_large}

In this section, we consider the case where the subset $S$ contains all except at most a~fixed number of states $k$.

\subsection{Unbounded word length}

\begin{problem}[Extensible large subset]\label{pbm:extensible_large}
For a~fixed $k \in \mathbb{N} \setminus \{0\}$, given $\mathrsfs{A}=(Q,\Sigma,\delta)$ and a~subset $S \subseteq Q$ with $|Q\setminus S| \le k$, is $S$ extensible?
\end{problem}

\begin{problem}[Totally extensible large subset]\label{pbm:totallyextensible_large}
For a~fixed $k \in \mathbb{N} \setminus \{0\}$, given $\mathrsfs{A}=(Q,\Sigma,\delta)$ and a~subset $S \subseteq Q$ with $|Q\setminus S| \le k$, is $S$ totally extensible?
\end{problem}

Problem~\ref{pbm:totallyextensible_large} is equivalent to deciding the existence of an~avoiding word for a~subset $S$ of size $\le k$.
Note that Problem~\ref{pbm:extensible_large} and Problem~\ref{pbm:totallyextensible_large} are equivalent for $k=1$, when they become the problem of avoiding a~single given state.
Its properties will also turn out to be different than in the case of $k \ge 2$.
We give a~special attention to this problem, defined as follows, and study it separately.

\begin{problem}[Avoidable state]\label{pbm:avoidable_state}
Given $\mathrsfs{A}=(Q,\Sigma,\delta)$ and a~state $q \in Q$, is $\{q\}$ avoidable?
\end{problem}

The following result may be a~bit surprising, in view of that it is the only case where a general problem (i.e., Problems~\ref{pbm:extensible} and~\ref{pbm:totallyextensible}) remains equally hard when the subset size is additionally bounded.
We show that Problem~\ref{pbm:extensible_large} is PSPACE-complete for all $k \ge 2$, although the question about its complexity remains open for the class of strongly connected automata.

\begin{theorem}\label{thm:extensible_large}
Problem~\ref{pbm:extensible_large} is PSPACE-complete for every fixed $k \ge 2$, even if the given automaton is synchronizing and binary.
\end{theorem}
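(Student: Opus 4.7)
The plan is to reduce from Problem~\ref{pbm:extensible} (or equivalently from DFA Intersection, as in the proof of Theorem~\ref{thm:extensible}), which is PSPACE-complete. Moreover, by a variant of the sink-state trick used in Proposition~\ref{pro:extensible_synchro}, one may assume the source instance is already on a synchronizing automaton; this property should transfer to the target.

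The first step is to reduce the general case $k \ge 2$ to the base case $k = 2$. Given an instance with complement of size exactly $2$, I would adjoin $k - 2$ fresh states that are fixed points under every letter and put them into the complement. Each such dummy state contributes exactly $1$ to $|T'\cdot w^{-1}|$ under every word, so the shrinkability of $T'$ is unchanged. It therefore suffices to handle the base case $k = 2$.

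For $k = 2$, the construction must encode a PSPACE-hard extensibility instance $(\mathrsfs{A},S)$ as a shrinkability instance for a two-element complement $T' = \{t_1,t_2\}$. The design objective is the equivalence
\[
|t_1 \cdot w^{-1}| + |t_2 \cdot w^{-1}| \le 1 \quad\Longleftrightarrow\quad w \text{ encodes an extending word for } S \text{ in } \mathrsfs{A}.
\]
This can be approached with a gadget that routes states of $\mathrsfs{A}$ into $t_1$ and $t_2$ via a fresh ``collector'' letter, combined with cyclic ``counter'' states in the spirit of the $\Gamma_i$ sets from Theorem~\ref{thm:extensible}, used to pin preimage cardinalities to prescribed values. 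The intended behavior is that generic words leave $|T'\cdot w^{-1}| \ge 2$, while a drop below $2$ forces the word to contain a suffix whose action on the embedded copy of $\mathrsfs{A}$ reveals an extending word for $S$. Verification would proceed in the usual two directions: an explicit extending word for $S$ in $\mathrsfs{A}$ is lifted through the collector letter to a shrinking word for $T'$; conversely, a shrinking word is decomposed by analyzing the positions of the collector letter and projecting back onto $\Sigma^{*}$. The binary-alphabet version then follows by applying Theorem~\ref{thm:extending_binary}.

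The main obstacle is that, unlike avoiding a single state (polynomial) or totally avoiding a bounded-size subset (polynomial by Theorem~\ref{thm:totallyextensible_large}), the \emph{strict shrinking} condition is a quantitative sum constraint. The gadget must ensure that neither ``avoid $t_1$'' alone nor ``avoid $t_2$'' alone suffices to shrink $T'$ (otherwise the problem would reduce to two polynomial subproblems), and that the only way the sum drops below $|T'|$ is through a joint mechanism encoding the source PSPACE-hard problem. Careful preimage bookkeeping — forcing both preimages to be nontrivial in generic cases, while allowing a precise one-unit decrease exactly in the successful case — will be the technical heart of the reduction, and likely requires a non-strongly-connected structure (which is consistent with the status of the strongly connected variant being left open).
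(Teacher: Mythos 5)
Your argument never actually constructs the central gadget: for the base case $k=2$ you state a ``design objective'' and describe ``intended behavior,'' but the automaton, the target subset, and both directions of the correctness equivalence are left unspecified, and you yourself identify the ``quantitative sum constraint'' as the unresolved technical heart. As written this is a plan, not a proof. Moreover, the plan starts from an inconvenient source problem: reducing from plain extensibility (Problem~\ref{pbm:extensible}) forces you to translate the cardinality comparison $|S\cdot w^{-1}|>|S|$ into near-emptiness of the preimage of a two-element set, which is precisely where you get stuck. The paper instead reduces from \emph{total} extensibility (Problem~\ref{pbm:totallyextensible}, PSPACE-complete by Theorem~\ref{thm:extensible} on a strongly connected instance), which dualizes cleanly: $S\cdot w^{-1}=Q$ holds iff the preimage of the complement is empty, a qualitative condition that a tiny complement can detect.

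Concretely, the paper adjoins two fresh states $e$ (a trap) and $s$ (a marker fixed by every letter of $\Sigma$) and one fresh letter $\alpha$ that sends $S\cup\{s\}$ to a fixed $f\in Q$ and sends $(Q\cup\{e\})\setminus S$ to $e$; taking $S'=Q$ (so $|Q'\setminus S'|=2$) gives $S'\cdot(w\alpha)^{-1}=(S\cdot w^{-1})\cup\{s\}$ for $w\in\Sigma^*$, which exceeds $|S'|$ iff $S\cdot w^{-1}=Q$, and a short normalization shows every extending word can be brought to the form $w\alpha$. No padding is needed for $k>2$, since a complement of size $2$ is already a valid instance for every $k\ge 2$; your proposed padding by universal fixed points would in fact create additional sink states and destroy the synchronizability claimed in the theorem. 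Finally, your appeal to the sink-state trick of Proposition~\ref{pro:extensible_synchro} to ``assume the source is synchronizing'' does not help: adding a sink outside $S$ kills total extensibility, and what is needed is that the \emph{constructed} automaton be synchronizing, which the paper obtains from the strong connectivity of the Theorem~\ref{thm:extensible} instance (a word $\alpha u\alpha$ sends everything to $e$), followed by the binarization of Theorem~\ref{thm:extending_binary}, which preserves the size of the complement.
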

\begin{proof}
Problem~\ref{pbm:extensible_large} is in PSPACE as a special case of Problem~\ref{pbm:extensible}, which is PSPACE-complete (Thm.~\ref{thm:extensible}).

Now, we show a~reduction from Problem~\ref{pbm:totallyextensible}.
The idea is as follows.
We construct an automaton $\mathrsfs{A}'$ from the automaton $\mathrsfs{A}=(Q,\Sigma,\delta)$ given for Problem~\ref{pbm:totallyextensible}.
We add two new states, $e$ and $s$, and let the initial set $S'$ contain all the original states of $\mathrsfs{A}$.
State $s$ is a sink state ensuring that the automaton is synchronizing; it cannot be reached from $S'$ by inverse transitions.
Hence, to extend $S'$, one needs to get $e$, which is doable only by a new special letter $\alpha$.
This letter has the transition that shrinks all states $Q$ to the initial subset $S$ for the totally extensible problem.
This is done through an arbitrary selected state $f \in Q$.
Then we can reach $Q \cup \{e\}$ only by a totally extending word for $\mathrsfs{A}$.
The overall construction is presented in Fig.~\ref{fig:extending_large}.

\begin{figure}[htb]\label{fig:extending_large}\centering
\includegraphics{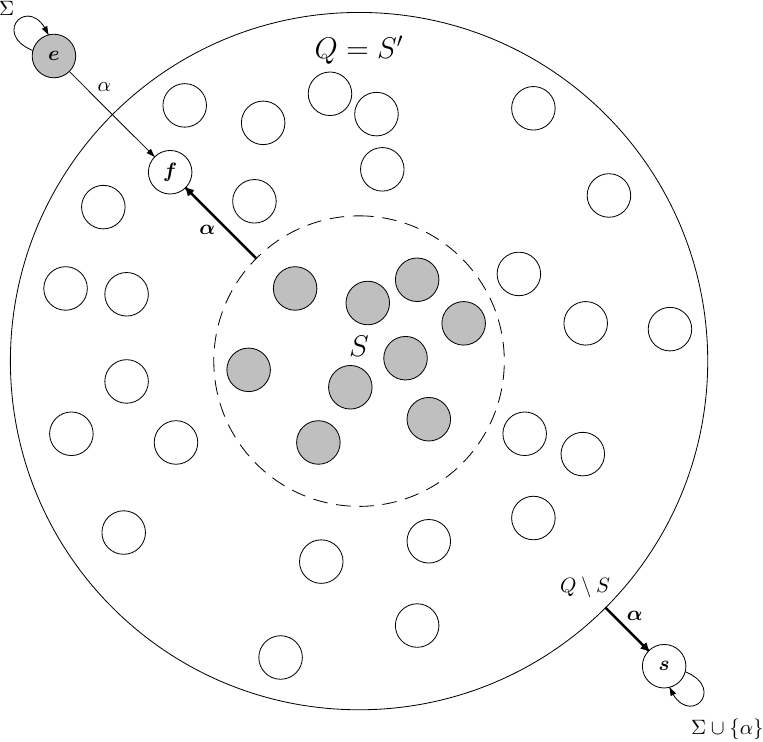}
\caption{The constructed automaton $\mathrsfs{A}'$:
States in $Q=S'$ have the transitions on $\Sigma$ as in $\mathrsfs{A}$.
The preimage of $S'=Q$ by $\alpha$ is marked by gray nodes and reflects the initial situation after applying for any subset containing $f$ and not containing $s$.}
\end{figure}

Let $\mathrsfs{A}=(Q,\Sigma,\delta)$ and $S\subseteq Q$ be an~instance of Problem~\ref{pbm:totallyextensible}.
We construct an~automaton $\mathrsfs{A}'=(Q'=Q \cup \{e,s\},\Sigma'=\Sigma \cup \{\alpha\},\delta')$, where $e,s$ are fresh states and $\alpha$ is a~fresh letter.
Let $f$ be an~arbitrary state from $Q \setminus S$ (if $S=Q$ then the problem is trivial).
We define $\delta'$ as follows:
\begin{enumerate}
\item $\delta'(q,a)=\delta(q,a)$ for $q \in Q$, $a \in \Sigma$;
\item $\delta'(q,a)=q$ for $q \in \{e,s\}$, $a \in \Sigma$;
\item $\delta'(q,\alpha)=f$ for $q \in S \cup \{e\}$;
\item $\delta'(q,\alpha)=s$ for $q \in (Q \cup \{s\}) \setminus S$.
\end{enumerate}
We define $S'=Q$.
Note that $|Q'\setminus S'|=2$, and hence automaton $\mathrsfs{A}'$ with $S'$ is an~instance of Problem~\ref{pbm:extensible_large} for $k=2$.
We will show that $S'$ is extensible in $\mathrsfs{A}'$ if and only if $S$ is totally extensible in $\mathrsfs{A}$.

If $S$ is totally extensible in $\mathrsfs{A}$ by a~word $w \in \Sigma^*$, we have $S' \cdot  (w\alpha)^{-1}=Q\setminus \{s\}$, which means that $S'$ is extensible in $\mathrsfs{A}'$.

Conversely, if $S'$ is extensible in $\mathrsfs{A}'$, then there is some extending word of the form $w\alpha$ for some $w \in \Sigma^*$, because $S'\cdot a^{-1}=S'$ for $a\in \Sigma$, $(Q' \setminus \{s\}) \cdot \alpha^{-1} \subseteq S'\cdot \alpha^{-1}$, and each reachable set (as a~preimage) is a~subset of $Q' \setminus \{s\}$.
We know that $S' \cdot (w\alpha)^{-1}=(S \cup \{e\}) \cdot w^{-1} = (S \cdot w^{-1}) \cup \{e\}$.
From the fact that $|S' \cdot (w\alpha)^{-1}| > |S'|$, we conclude that $S \cdot w^{-1} = Q$, so $S$ is totally extensible in $\mathrsfs{A}$.

Note that $\mathrsfs{A}'$ is synchronizing, since $Q' \cdot \alpha^2=\{f,s\} \cdot \alpha = \{s\}$.

Now, we show that we can reduce the alphabet to two letters.
Consider the application of the Theorem~\ref{thm:extending_binary} to Problem~\ref{pbm:extensible_large}. 
Note that the reduction in the proof keeps the size of complement set the same (i.e.\ $|Q' \setminus S'| = |Q'' \setminus S''|$, where $Q''$ and $S''$ are the set and the subset of states in the constructed binary automaton), so we can apply it.

Furthermore, we identify all the states of the form $(s,a)$ for $a \in \Sigma$ in the obtained binary automaton to one sink state $s''$.
In this way, we get a~synchronizing binary automaton (since $\mathrsfs{A}'$ is synchronizing).
The extending words remain the same, since the identified state $s''$ is not reversely reachable from $S''$, and $s''$ is not contained in the subset $S''$.

Finally, we conclude that the proof generalizes to the case of any $k \geq 2$ since we can add an arbitrary number of states with the same transitions as $e$.
\end{proof}

Now, we focus on totally extending words for large subsets, which we study in terms of avoiding small subsets.
First we provide a~complete characterization of single states that are avoidable:
\begin{theorem}\label{thm:avoiding_characterization}
Let $\mathrsfs{A}=(Q,\Sigma,\delta)$ be a~strongly connected automaton.
For every $q \in Q$, state $q$ is avoidable if and only if there exists $p \in Q \setminus \{q\}$ and $w \in \Sigma^*$ such that $q\cdot w=p\cdot w$.
\end{theorem}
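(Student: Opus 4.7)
I would prove the two directions separately; the forward direction is elementary, while the converse is the one where strong connectivity plays an essential role.

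For the \emph{forward direction}, given a word $u$ with $q \notin Q \cdot u$, I would examine the functional graph of $u$ on $Q$. The forward orbit $q, q \cdot u, q \cdot u^2, \ldots$ eventually enters a cycle $C$, and since $q$ has no preimage under $u$ it cannot lie in $C$. Let $j \geq 1$ be the least index with $q \cdot u^j \in C$; because $u$ permutes $C$, so does $u^j$, hence there is a unique $c \in C$ with $c \cdot u^j = q \cdot u^j$. Since $c \in C$ and $q \notin C$, we have $c \neq q$, and taking $p := c$, $w := u^j$ gives the required collision.

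For the \emph{converse}, I would start from $q \cdot w_0 = p \cdot w_0 = r$ with $p \neq q$. Using strong connectivity, I pick $v$ with $r \cdot v = q$ and replace $w_0$ by $w_0 v$, so that henceforth $w_0$ fixes $q$ and also sends $p$ to $q$. Let $I_\infty := \bigcap_{k \geq 0} Q \cdot w_0^k$ be the stable image, on which $w_0$ acts as a permutation. Since $q$ is fixed by $w_0$, $q \in I_\infty$; and since $w_0|_{I_\infty}$ is injective but $q \cdot w_0 = p \cdot w_0$, we must have $p \notin I_\infty$. In particular $I_\infty \subsetneq Q$ and $p$ is avoidable by $w_0^k$ for sufficiently large $k$.

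The main obstacle is to upgrade avoidability of $p$ into avoidability of $q$. The natural plan is to find a word $u$ with $I_\infty \cdot u \not\ni q$, so that $Q \cdot w_0^k u$ avoids $q$ for large $k$; this requires every preimage of $q$ under $u$ to lie outside $I_\infty$. Strong connectivity supplies many words $u$ with $p \cdot u = q$, but such $u$ may admit other preimages of $q$ inside $I_\infty$, blocking the argument. Fortunately, strong connectivity also forces $I_\infty$ to be non-absorbing: some letter $a$ and state $s \in I_\infty$ satisfy $s \cdot a \notin I_\infty$, for otherwise $p \notin I_\infty$ could never be reached from inside $I_\infty$. I would then iterate such ``escape'' letters together with powers of $w_0$, using induction on $|I_\infty|$, to shrink the reachable image step by step until it no longer contains $q$. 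The delicate point I expect to struggle with is ensuring that this iteration does not reintroduce $q$ into the image once it has been momentarily removed; this is where one likely needs the structural fact that minimum-rank images in a strongly connected automaton form a single orbit under the transition monoid, so that some minimum-rank image must miss $q$.
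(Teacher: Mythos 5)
Your first direction (avoidable $\Rightarrow$ mergeable) is correct, and it is a genuinely different and more elementary route than the paper's: the paper argues by contradiction with a minimal-rank word $w=uv$ and a counting argument matching the classes of the kernel $\sim_w$ against the minimal image $Q\cdot w$, whereas your functional-graph argument for a single avoiding word $u$ (the orbit of $q$ enters a cycle $C$ that cannot contain $q$ because $q$ has no $u$-preimage, and $u^j$ permutes $C$) is shorter, self-contained, and does not use strong connectivity at all. That half is fine as written.

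The converse direction, however, has a genuine gap, and it originates in one wrong choice at the outset: you route the merge point $r$ back to $q$, which makes $q$ a fixed point of $w_0$ and hence guarantees $q\in Q\cdot w_0^k$ for every $k$. Iterating $w_0$ can therefore only ever avoid $p$, never $q$, and everything after ``the main obstacle'' is an unexecuted plan rather than a proof; in particular its key ingredient --- that some minimum-rank image misses $q$ --- is essentially a restatement of the claim to be proved, not a structural fact you may invoke. The repair is to make the opposite choice, which is exactly what the paper does: use strong connectivity to pick $v$ with $r\cdot v=p$ rather than $r\cdot v=q$, so that $w_0v$ fixes $p$ and sends $q$ to $p$. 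Then $p\in Q\cdot(w_0v)^k$ for all $k$, and whenever $q$ also lies in $Q\cdot(w_0v)^k$, one more application of $w_0v$ identifies the two distinct states $p$ and $q$ and strictly decreases the cardinality of the image; since these cardinalities cannot decrease forever, some power of $w_0v$ avoids $q$. With that one change your stable-image machinery becomes unnecessary and the direction closes in a few lines.
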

\begin{proof}
First, for a given $q \in Q$, let $p \in Q \setminus \{q\}$ and $w \in \Sigma^*$ be such that $q\cdot w = p\cdot w$.
Since the automaton is strongly connected, there is a~word $w'$ such that $(p\cdot w)\cdot w'=(q\cdot w)\cdot w'=p$.
For each subset $S \subseteq Q$ such that $p \in S$ we have $p \in S\cdot ww'$. 
Moreover, if $q \in S$ then $|S\cdot ww'| < |S|$, because $\{q,p\}\cdot ww'=\{p\}$. 
If $q$ is not avoidable, then all subsets $Q\cdot (ww'),Q\cdot (ww')^2,\ldots$ contain $q$ and they form an~infinite sequence of subsets of decreasing cardinality, which is a~contradiction.

Now, consider the other direction.
Suppose for a~contradiction that a state $q \in Q$ is avoidable, but there is no state $p \in Q \setminus \{q\}$ such that $\{q,p\}$ can be compressed.
Let $u$ be a~word of the minimal rank in $\mathrsfs{A}$, and $v$ be a~word that avoids $q$. Then $w=uv$ has the same rank and also avoids $q$.
Let $\sim$ be the equivalence relation on $Q$ defined with a word $w$ as follows:
$$p_1\sim p_2 \iff p_1 \cdot w = p_2 \cdot w.$$
The equivalence class $[p]_\sim$ for $p \in Q$ is $(p\cdot w)\cdot w^{-1}$.
There are $|Q/{\sim}|=|Q\cdot w|$ equivalence classes and one of them is $\{q\}$, since $q$ does not belong to a~compressible pair of states.
For every state $p \in Q$, we know that $|(Q\cdot w) \cap [p]_\sim| \le 1$, because $[p]_\sim$ is compressed by $w$ to a~singleton and $Q\cdot w$ cannot be compressed by any word.
Note that every state $r \in Q\cdot w$ belongs to some class $[p]_\sim$.
From the equality $|Q/\sim|=|Q\cdot w|$ we conclude that for every class $[p]_\sim$ there is a~state $r \in (Q\cdot w) \cap [p]_\sim$, thus $|(Q\cdot w) \cap [p]_\sim| = 1$.
In particular, $1=|(Q\cdot w) \cap [q]_\sim|=|(Q\cdot w) \cap \{q\}|$.
This contradicts that $w$ avoids $q$.
\end{proof}

Note that if $\mathrsfs{A}$ is not strongly connected, then every state from a~strongly connected component that is not a~sink can be avoided.
If a~state belongs to a~sink component, then we can consider the sub-automaton of this sink component, and by Theorem~\ref{thm:avoiding_characterization} we know that given $q \in Q$, it is sufficient to check whether $q$ belongs to a~compressible pair of states.
Hence, Problem~\ref{pbm:avoidable_state} can be solved using the well-known algorithm (stage~1 in the proof of~\cite[Theorem~5]{Ep1990}) computing the pair automaton and performing a~breadth-first search with inverse edges on the pairs of states. It works in $\O(|\Sigma|n^2)$ time and $\O(n^2+|\Sigma|n)$ space.

We note that in a~synchronizing automaton all states are avoidable except a \emph{sink state}, which is a~state $q$ such that $q\cdot a=q$ for all $a \in \Sigma$. We can check this condition and hence verify if a~state is avoidable in a~synchronizing automaton in $\O(|\Sigma|)$ time.

The above algorithm does not find an~avoiding word but checks avoidability indirectly.
For larger subsets than singletons, we construct another algorithm finding a~word avoiding the subset, which also generalizes the idea from Theorem~\ref{thm:avoiding_characterization}.
From the following theorem, we obtain that Problem~\ref{pbm:totallyextensible_large} for a constant $k \ge 2$ can be solved in polynomial time. 

\begin{theorem}\label{thm:totallyextensible_large}
Let $\mathrsfs{A}=(Q,\Sigma,\delta)$, let $r$ be the minimum rank in $\mathrsfs{A}$ over all words, and let $S \subseteq Q$ be a~subset of size $\le k$.
We can find a~word $w$ such that $(Q\cdot w) \cap S = \emptyset$ or verify that it does not exist in $\O(|\Sigma|(n^{\min(r,k)}+n^2)+n^3)$ time and $\O(n^{\min(r,k)}+n^2+|\Sigma|n)$ space.
Moreover the length of $w$ is bounded by $\O(n^{\min(r,k)}+n^3))$.
\end{theorem}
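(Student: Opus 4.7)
The plan is to reduce avoiding $S$ to a reachability problem in a small portion of the subset automaton, exploiting that once a minimum-rank word has been applied the image-size is stuck at $r$.

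First, in $\O(n^3+|\Sigma|n^2)$ time I would compute the rank $r$, a word $u$ of rank $r$, and its image $P=Q\cdot u$. The standard recipe is to precompute, by an inverse BFS on the pair automaton (time $\O(|\Sigma|n^2)$, space $\O(n^2+|\Sigma|n)$), a shortest compressing word for every compressible pair of states, and then run the Eppstein-style greedy procedure: starting from $Q$, while the current subset still contains a compressible pair, pick one and apply its stored compressing word. At most $n-r$ such compressions occur, each costing $\O(n^2)$, and the resulting word $u$ has length $\O(n^3)$.

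Now every subset reachable from $P$ in the subset automaton has size exactly $r$: a strictly smaller image would contradict the minimality of $r$, while further action cannot grow it beyond $|P|=r$. In particular, if $r>n-|S|$ then no avoiding word exists and we report this immediately. Otherwise I would run a BFS starting from $P$ on the $r$-subsets of $Q$, halting at the first $T$ with $T\cap S=\emptyset$; if the BFS exhausts the reachable $r$-subsets without finding such $T$, we certify that no avoiding word exists. Representing subsets as $n$-bit vectors, each transition $T\mapsto T\cdot a$ costs $\O(n)$, so with $\binom{n}{r}\le n^r$ states and $|\Sigma|$ outgoing letters each the BFS runs in $\O(|\Sigma|n^r)$ time, and parent pointers reconstruct a word $v$ with $P\cdot v\cap S=\emptyset$, giving $w=uv$ of length $\O(n^3+n^r)$.

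To replace $n^r$ by $n^{\min(r,k)}$, when $k<r$ I would switch to a dual BFS whose state space has size only $\O(n^k)$, obtained by indexing configurations by the intersection with $S$ together with just enough ambient data to make the evolution under letters well defined, and collapsing configurations that are equivalent for the purpose of reaching intersection $\emptyset$. The step I expect to be the main obstacle is proving correctness of this smaller search -- that the collapsed BFS reaches ``intersection $=\emptyset$'' exactly when $S$ is avoidable -- which will require an analogue, tailored to subsets of size $k$, of the unique-element-per-equivalence-class argument used in the proof of Theorem~\ref{thm:avoiding_characterization}. Running whichever of the two BFS variants has the smaller state space then yields the claimed time, space, and word-length bounds.
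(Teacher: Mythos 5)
Your handling of the case $r\le k$ is sound and coincides with the paper's strategy: compute a minimal-rank word $u$ in $\mathcal{O}(n^3+|\Sigma|n^2)$ time, observe that every subset reachable from $Q\cdot u$ has size exactly $r$, and search the reachable $r$-subsets for one disjoint from $S$ (this is valid because if $v$ avoids $S$ then so does $uv$). The genuine gap is exactly the step you flag yourself: for $k<r$ you propose to index configurations ``by the intersection with $S$ together with just enough ambient data,'' but the intersection of the current image with $S$ does not evolve deterministically under letters --- a state currently outside $S$ may be the one that lands in $S$ after the next letter --- and it is not clear that any $\mathcal{O}(n^k)$-size state space built only from $S$ itself makes the dynamics well defined. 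Since the whole content of the theorem when $k<r$ is precisely this compression from $n^r$ to $n^k$ states, the proof is incomplete at its central point.

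The paper closes this gap as follows. Let $\sim$ be the equivalence $p_1\sim p_2 \iff p_1\cdot u = p_2\cdot u$; it has exactly $r$ classes, each compressed by $u$ to a singleton, and since $Q\cdot uw'$ has rank $r$ for every $w'$, the image $Q\cdot uw'$ contains \emph{exactly one} state per class. Hence $Q\cdot uw'$ avoids $S$ if and only if, for each of the $z=|S/{\sim}|\le\min(r,k)$ classes $[s]_\sim$ meeting $S$, the unique image state in $[s]_\sim$ lies in $[s]_\sim\setminus S$; states in the remaining classes avoid $S$ automatically. Those $z$ distinguished states are the image under $w'$ of some $z$-subset $Q'\subseteq Q\cdot u$, and as a $z$-tuple they evolve deterministically under letters. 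This yields the characterization: $S$ is avoidable iff there exist $Q'\subseteq Q\cdot u$ with $|Q'|=z$ and a word $w'$ with $(Q'\cdot w')\cap([s]_\sim\setminus S)\neq\emptyset$ for every $s\in S$; a reachability search over $z$-tuples (each tuple visited once across all choices of $Q'$) then gives the $\mathcal{O}(|\Sigma|n^{\min(r,k)})$ term and the stated length bound. This is the ``analogue of the unique-element-per-class argument'' you anticipated needing; without it, the $k<r$ branch of your algorithm is not justified.
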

\begin{proof}
Similarly to the proof of Theorem~\ref{thm:avoiding_characterization}, let $u$ be a~word of the minimal rank $r$ in $\mathrsfs{A}$ and let $\sim$ be the equivalence relation on $Q$ defined by word $u$ as follows:
$$p_1 \sim p_2 \iff p_1 \cdot u = p_2 \cdot u.$$
The equivalence class $[p]_\sim$ for $p \in Q$ is the set $(p\cdot u)\cdot u^{-1}$.
There are $|Q/{\sim}|=|Q\cdot u|$ equivalence classes.

First, we prove a~key observation that the image of each word starting with prefix $u$ has exactly one state in each equivalence class of $\sim$ relation.
Let $w=uw'$. Then the word $w$ has rank $r$ and its image is not compressible. For every state $p \in Q$, we know that $|(Q\cdot w) \cap [p]_\sim| \le 1$, because $[p]_\sim$ is compressed by $u$ to a~singleton and $Q\cdot w$ cannot be compressed by any word. Note that every state $q \in Q\cdot w$ belongs to some class $[p]_\sim$. From the equality $|Q/\sim|=|Q\cdot u|=|Q\cdot w|$ we conclude that for every class $[p]_\sim$ there is an~unique state $q_{[p]_\sim} \in (Q\cdot w) \cap [p]_\sim$. This proves the mentioned observation.

Now, we are going to show the following characterization: $S$ is avoidable if and only if there exist a~subset $Q' \subseteq Q\cdot u$ of size $|S/{\sim}|$ and a~word $w'$ such that $(Q' \cdot w') \cap ([s]_\sim \setminus S) \neq \emptyset$ for each $s \in S$. The idea of the characterization is illustrated in Fig.~\ref{fig:avoiding}.

\begin{figure}[htb]\centering
\includegraphics{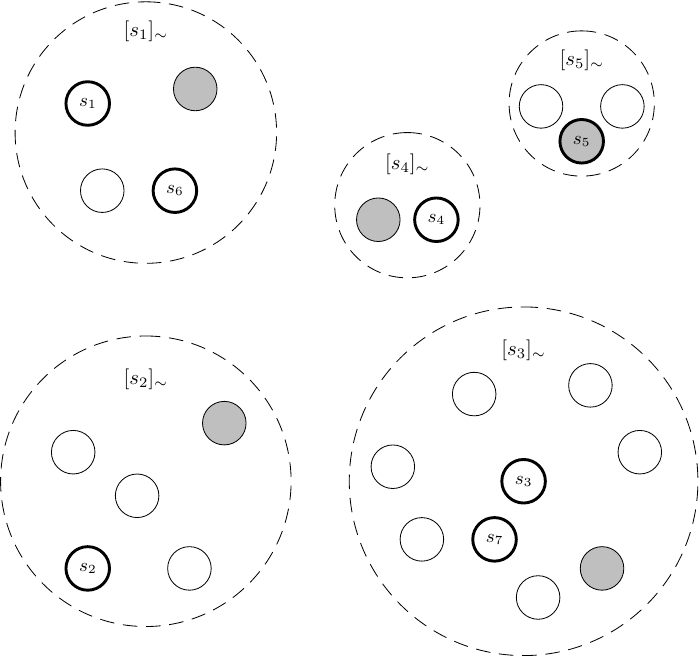}
\caption{The states of an~automaton divided by $\sim$. The states $s_i \in S$ are marked by bold border and the states $q_{[s_i]_\sim}$ in the image $Q\cdot uw'$ are filled. Every class has exactly one state in the image but can contain more than one state from $S$. If for each class this state is not in $S$, then $S$ is avoided. This is not the case in this example, because $s_5 \in Q\cdot uw'$.}\label{fig:avoiding}
\end{figure}

Suppose that $S$ is avoidable, and let $w'$ be an~avoiding word for $S$. Then the word $w=uw'$ also avoids $S$.
Observe that $Q \cdot w$ has an~unique state $q_{[p]_\sim} \in (Q\cdot w) \cap [p]_\sim$ for each class $[p]_\sim$.
Then for every state $s \in S$, we have $q_{[s]_\sim} \in [s]_\sim \setminus S$, because $w$ avoids $S$ and $q_{[s]_\sim} \in Q\cdot w$.
Notice that $[s]_\sim \cap S$ can contain more than one state, so the set $\{q_{[s]_\sim} \mid s \in S\}$ has size $|S/{\sim}|$, which is not always equal to $|S|$.
Therefore, there exists a~subset $Q' \subseteq Q\cdot u$ of size $|S/{\sim}|$ such that $Q'\cdot w' = \{q_{[s]_\sim} \mid s \in S\}$.
Now, we know that for every $s \in S$ we have $q_{[s]_\sim} \in Q' \cdot w'$ and $q_{[s]_\sim} \in [s]_\sim \setminus S$.
We conclude that, if $S$ is avoidable, then there exist a~subset $Q' \subseteq Q\cdot u$ of size $|S/{\sim}|$ and a~word $w'$ such that $(Q' \cdot w') \cap ([s]_\sim \setminus S) \neq \emptyset$ for every $s \in S$. 

Conversely, suppose that there is a~subset $Q' \subseteq Q\cdot u$ of size $|S/{\sim}|$ and a~word $w'$ such that $(Q' \cdot w') \cap ([s]_\sim \setminus S) \neq \emptyset$ for every $s \in S$.
Since in the image $Q \cdot uw'$ there is exactly one state in each equivalence class, we have
$((Q\cdot u) \setminus Q') \cdot w' \subseteq Q \setminus \bigcup_{s \in S}([s]_\sim) \subseteq Q \setminus S$, and by the assumption, $(Q' \cdot w') \cap S = \emptyset$.
Therefore, we get that $uw'$ is an~avoiding word for $S$.

This characterization gives us Alg.~\ref{alg:avoid_subset} to find $w$ or verify that $S$ cannot be avoided.

\begin{algorithm}[htb]
\caption{Avoiding a~subset.}\label{alg:avoid_subset}
\begin{algorithmic}[1]
\Require Automaton $\mathrsfs{A}(Q,\Sigma,\delta)$ and a~subset $S \subseteq Q$.
\State Find a~word $u$ of the minimal rank.
\State Compute $|S/{\sim}|$.
\ForAll{$Q' \subseteq Q \cdot u$ of size $|S/{\sim}|$}
  \If{there is a~word $w'$ such that $(Q' \cdot w') \cap ([s]_\sim \setminus S) \neq \emptyset$ for each $s \in S$}
    \State \Return $uw'$.
  \EndIf
\EndFor
\State \Return ``$S$ is unavoidable''.
\end{algorithmic}
\end{algorithm}

Alg.~\ref{alg:avoid_subset} first finds a~word $u$ of the minimal rank.
This can be done by in $\O(n^3+|\Sigma|n^2)$ time and $\O(n^2+|\Sigma|n)$ space by the well-known algorithm \cite[Algorithm~1]{Ep1990} generalized to non-synchronizing automata (cf. the proof of Proposition~\ref{pro:totallyextensible_small}.
For every subset $Q' \subseteq Q \cdot u$ of size $z=|S/{\sim}|$ the algorithm checks whether there is a~word $w'$ mapping $Q'$ to avoid $S$, but using its $\sim$-classes.
This can be done by constructing the automaton $\mathrsfs{A}^{z}(Q^{z},\Sigma,\delta^{z})$, where $\delta^{z}$ is $\delta$ naturally extended to $z$-tuples of states, and checking whether there is a~path from $Q'$ to a~subset containing a~state from each class $[s]_{\sim}$ but avoiding the states from $S$.
Note that since $Q'$ cannot be compressed, every reachable subset from $Q'$ has also size $|Q'|$.
The number of states in this automaton is $\binom{n}{z} \in \O(n^{z})$.
Also, note that we have to visit every $z$-tuple only once during a~run of the algorithm, and we can store it in $\O(n^{z}+|\Sigma|n)$ space.
Therefore, the algorithm works in $\O(n^3+|\Sigma|(n^2+n^{z}))$ time and $\O(n^2+n^{z}+|\Sigma|n)$ space. 

The length of $u$ is bounded by $\O(n^3)$, and the length of $w'$ is at most $\O(n^{z})$. Note that $z=|S/{\sim}| \le \min(r,|S|)$, where $r$ is the minimal rank in the automaton.
\end{proof}

\subsection{Bounded word length}

We now turn our attention to the variants of Problem~\ref{pbm:extensible_large}, Problem~\ref{pbm:totallyextensible_large}, and Problem~\ref{pbm:avoidable_state} where an~upper bound on the length of the word is additionally given.

\begin{problem}[Extensible large subset by short word]\label{pbm:extensible_large_len}
For a~fixed $k \in \mathbb{N} \setminus \{0\}$, given $\mathrsfs{A}=(Q,\Sigma,\delta)$, a~subset $S \subseteq Q$ with $|Q \setminus S| \le k$, and an~integer $\ell$ given in binary representation, is $S$ extensible by a word of length at most $\ell$?
\end{problem}

\begin{problem}[Totally extensible large subset by short word]\label{pbm:totallyextensible_large_len}
For a~fixed $k \in \mathbb{N} \setminus \{0\}$, given $\mathrsfs{A}=(Q,\Sigma,\delta)$, a~subset $S \subseteq Q$ with $|Q\setminus S| \le k$, and an~integer $\ell$ given in binary representation, is $S$ totally extensible by a word of length at most $\ell$?
\end{problem}

As before, both problems for $k=1$ are equivalent to the following:

\begin{problem}[Avoidable state by short word]\label{pbm:avoidable_state_len}
Given $\mathrsfs{A}=(Q,\Sigma,\delta)$, a~state $q \in Q$, and an~integer $\ell$ given in binary representation, is $\{q\}$ avoidable by a word of length at most $\ell$?
\end{problem}

Problem~\ref{pbm:extensible_large_len} for $k \ge 2$ obviously remains PSPACE-complete.
By the following theorem, we show that Problem~\ref{pbm:avoidable_state_len} is NP-complete, which then implies NP-completeness of Problem~\ref{pbm:totallyextensible_large_len} for every $k \ge 1$ (by Corollary~\ref{cor:totallyextending_large_len_sych}).

\begin{theorem}\label{thm:avoiding_length_NP-c}
Problem~\ref{pbm:avoidable_state_len} is NP-complete, even if the automaton is simultaneously strongly connected, synchronizing, and binary.
\end{theorem}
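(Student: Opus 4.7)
The plan splits into two parts: NP membership and NP-hardness, with the latter under the simultaneous restrictions of being strongly connected, synchronizing, and binary.

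\emph{NP membership.} I would invoke Theorem~\ref{thm:totallyextensible_large} with $|S|=1$ (so $k=1$): whenever a state $q$ is avoidable, there exists an avoiding word of length $\O(n^3)$. Hence, when the answer is YES, a word of length at most $\min(\ell,\O(n^3))$ witnesses it, which is polynomial in the input size. The verification amounts to simulating the guessed word on every state of $Q$ and checking that $q$ is not hit.

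\emph{NP-hardness.} I would reduce from the NP-complete problem of deciding whether a strongly connected, synchronizing, binary automaton admits a reset word of length at most $\ell$ (Vorel~\cite{Vorel2017ComplexityEulerian}, also used in the proof of Proposition~\ref{pro:totallyextensible_small_len}). The essential feature of Vorel's construction is that whenever a reset word of length $\le\ell$ exists, every such word necessarily maps all of $Q$ to one particular state $s_2$.

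Given an instance $(\mathrsfs{A},\ell)$ with $\mathrsfs{A}=(Q,\{a,b\},\delta)$ and distinguished state $s_2$, I would build in polynomial time a binary automaton $\mathrsfs{A}'=(Q\cup\{q'\},\{a,b\},\delta')$ together with a bound $\ell'=\ell+\O(1)$. The gadget around the fresh state $q'$ is designed so that: (i)~$q'$ inherits the outgoing transitions of $s_2$; (ii)~all transitions entering $s_2$ in $\mathrsfs{A}$, under both letters, are rerouted to enter $q'$ in $\mathrsfs{A}'$; (iii)~a small auxiliary component reintroduces reachability of $s_2$ from the rest of $Q'$, in order to preserve strong connectivity and synchronizability. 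The intended correspondence is that $\mathrsfs{A}$ admits a reset word of length $\le \ell$ if and only if $\mathrsfs{A}'$ admits a word of length $\le \ell'$ avoiding $q'$; one direction uses the Vorel-reset word followed by a constant tail reaching $q'$'s ``twin position'', and the other direction extracts, from a short avoiding word, a short reset word via the rerouted transitions.

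The main obstacle is to make the gadget \emph{tight}: every short avoiding word for $q'$ in $\mathrsfs{A}'$ must correspond to a short reset word of $\mathrsfs{A}$ ending at $s_2$, and vice versa. Rerouting under both letters is essential, since otherwise one could avoid $q'$ trivially by ending the word in a letter having no incoming transition to $q'$. One must also check that the auxiliary component reintroducing reachability of $s_2$ does not itself open up a new short route to avoiding $q'$ that bypasses the reset behaviour. Once this correspondence is formally verified, NP-hardness follows, while strong connectivity, synchronizability, and the binary alphabet are preserved by the locality of the modification together with the auxiliary gadget.
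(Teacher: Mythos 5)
Your NP-membership argument is fine: invoking Theorem~\ref{thm:totallyextensible_large} with $k=1$ gives an $\O(n^3)$ bound on the length of shortest avoiding words, so a certificate of length $\min(\ell,\O(n^3))$ can be guessed and verified in polynomial time; the paper instead cites the $\O(n^2)$ bound of~\cite{Szykula2018ImprovingTheUpperBound}, but either bound suffices.

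The hardness part has a genuine gap. Splitting $s_2$ into a twin $q'$ that receives all former in-edges of $s_2$ merely translates ``avoid $q'$ in $\mathrsfs{A}'$'' into ``avoid $s_2$ in $\mathrsfs{A}$'', and avoiding a single state is a priori far weaker than resetting: a word $w=vc$ avoids $q'$ exactly when the image $Q'\cdot v$ misses the set of $c$-in-neighbours of $s_2$, and nothing in Vorel's construction (whose relevant property is only that reset words of length $\le\ell$ all end in $s_2$) rules out a very short word achieving this while being nowhere near a reset word. You explicitly name this obstacle (``make the gadget tight'') but do not supply the mechanism that resolves it, and that mechanism is the heart of the proof. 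The paper's reduction starts instead from Eppstein's construction, which has a sink state $z$, a set $F$ with both letters mapping $F$ to $z$, and the properties that every word of length $m+1$ is reset and every length-$m$ image of the starting states lies in $F\cup\{z\}$; it then attaches a chain $r_0\to r_1\to\cdots\to r_m\to z$ under both letters, which forces every word avoiding $z$ to have length at least $m+1$ (exactly $2m+3$ after the binary encoding), and at that exact length the funnel through $F$ forces the avoiding word to encode a reset word of the source automaton. Some analogue of this delay-chain lower-bound gadget, together with structural control of where all states sit after exactly $m$ steps, is what your construction is missing; without it the direction ``short avoiding word $\Rightarrow$ short reset word'' fails.
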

\begin{proof}
The problem is in NP, because we can non-deterministically guess a word $w$ as a certificate, and verify $q \notin Q\cdot w$ in $\O(|\Sigma|n)$ time.
If the state $q$ is avoidable, then the length of the shortest avoiding words is at most $\O(n^2)$ \cite{Szykula2018ImprovingTheUpperBound}.
Then we can guess an avoiding word $w$ of at most quadratic length and compute $Q\cdot w$ in $\O(n^3)$ time.

In order to prove NP-hardness, we present a polynomial-time reduction from the problem of determining the reset threshold in a specific subclass of automata, which is known to be NP-complete \cite[Theorem~8]{Ep1990}.
The reduction has two steps.
First, we construct a strongly connected synchronizing ternary automaton $\mathrsfs{A}'$ for which deciding about the length of an avoiding word is equivalent to determining the existence of a bounded length reset word in the original automaton.
Then, based on the ideas from~\cite{Berlinkov2014OnTwoAlgorithmicProblems}, we turn the automaton into a binary automaton $\mathrsfs{A}$, which still has the desired properties.

Let us have an instance of this problem from the Eppstein's proof of~\cite[Theorem~8]{Ep1990}.
Namely, for a given synchronizing automaton $\mathrsfs{B} = (Q_{\mathrsfs{B}}, \{\alpha_0,\alpha_1\}, \delta_{\mathrsfs{B}})$ and an integer $m>0$, we are to decide whether there is a reset word $w$ of length at most $m$.
We do not want to reproduce here the whole construction from the Eppstein proof but we need some ingredients of it.
Specifically, $\mathrsfs{B}$ is an automaton with a sink state $z \in Q_\mathrsfs{B}$, and there are two subsets $S=\{s_1,\ldots,s_d\}$ and $F \subseteq Q_{\mathrsfs{B}}$ with the following properties:
\begin{enumerate}
\item Each state $q \in Q_\mathrsfs{B} \setminus S$ is reachable from a state $s \in S$ through a (directed) path in the underlying digraph of $\mathrsfs{B}$.
\item For each state $s \in S$ and each word $w$ of length $m$, we have $\delta_{\mathrsfs{B}}(s,w) \in F \cup \{z\}$.
\item For each $f \in F$ we have $\delta_{\mathrsfs{B}}(f,\alpha_0) = \delta_{\mathrsfs{B}}(f,\alpha_1) = z$.
\item For each state $s \in S$ and a non-empty word $w \in \{\alpha_0,\alpha_1\}^{<m}$, we have $\delta_{\mathrsfs{B}}(s,w) \notin (F \cup S)$.
\end{enumerate}
In particular, it follows that each word of length $m+1$ is reset.
Deciding whether $\mathrsfs{B}$ has a reset word of length $m$ is NP-hard.

We transform the automaton $\mathrsfs{B}$ into $\mathrsfs{A}'$ as follows.
First, we add the subset $R = \{r_0,r_1, \ldots, r_{m}\}$ of states to provide that $z$ is not avoidable by words of length less than $m+1$.
The transitions of both letters are $\delta_{\mathrsfs{A}'}(r_i,\alpha_0) = \delta_{\mathrsfs{A}'}(r_i,\alpha_1) = r_{i+1}$ for $i=0,\ldots,m-1$, and $\delta_{\mathrsfs{A}'}(r_m,\alpha_0) = \delta_{\mathrsfs{A}'}(r_m,\alpha_1) = z$.

Secondly, we add a set of states $S' = \{ s'_1, \ldots, s'_{d}\}$ of size $d=|S|$ and a letter $\alpha_2$ to make the automaton strongly connected.
Letters $\alpha_0$ and $\alpha_1$ map $S'$ to the corresponding states from $S$, that is, $\delta_{\mathrsfs{A}'}(s'_{i},\alpha_0) = \delta_{\mathrsfs{A}'}(s'_{i},\alpha_1) = s_{i} \in S$.
Letter $\alpha_2$ connects states $r_0, s'_1,s'_2 \ldots, s'_{d}$ into one cycle, i.e.
$$\delta_{\mathrsfs{A}'}(r_0,\alpha_2) = s'_1,\quad \delta_{\mathrsfs{A}'}(s'_1,\alpha_2) = s'_2,\quad \ldots,\quad \delta_{\mathrsfs{A}'}(s'_{d-1},\alpha_2) = s'_{d}, \quad \delta_{\mathrsfs{A}'}(s'_{d},\alpha_2) = r_{0}.$$
We also set $\delta_{\mathrsfs{A}'}(s_{d},\alpha_2) = r_{1}$, $\delta_{\mathrsfs{A}'}(z,\alpha_2) = r_0$, and all the other transitions of $\alpha_2$ we define equal to the transitions of $\alpha_0$.

\begin{figure}[htb]\centering
\includegraphics{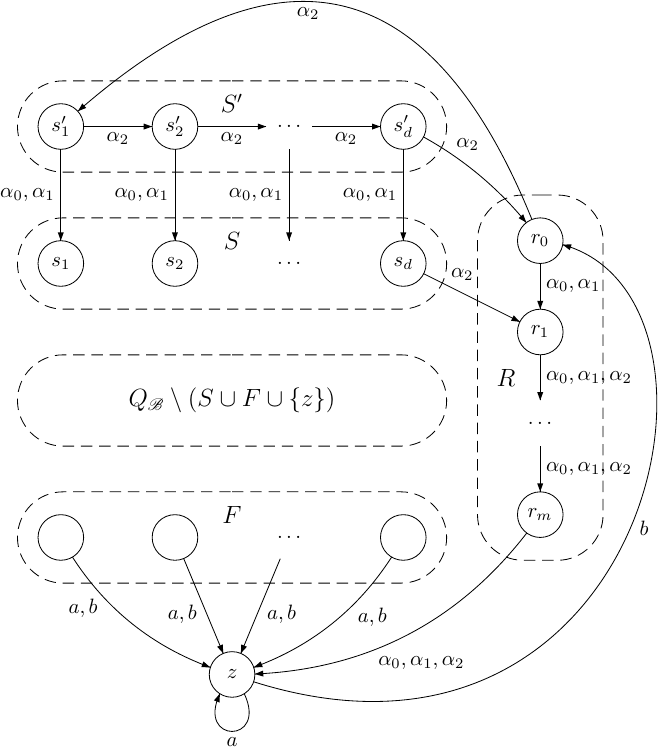}
\caption{The automaton $\mathrsfs{A}$ obtained from $\mathrsfs{A}'$ in the proof of Theorem~\ref{thm:avoiding_length_NP-c}. Here every state $q$ represents $\phi(q)$, and we have $\alpha_0\colon aa,ab$, $\alpha_1\colon ba$, and $\alpha_2\colon bb$.}\label{fig:avoiding_reduction}
\end{figure}

Finally, we transform $\mathrsfs{A}'$ to the final automaton $\mathrsfs{A}=(Q,\{a,b\},\delta)$.
We encode letters $\alpha_0,\alpha_1,\alpha_2$ by $2$-letter words over $\{a,b\}$ alike it was done in~\cite{Berlinkov2014OnTwoAlgorithmicProblems}.
Namely, for each state $q \in Q_{\mathrsfs{A}'} \setminus (F \cup \{z\})$, we add two new states $q^{a}, q^{b}$ and define their transitions as follows:
\begin{align*}
&\delta(q,a) = q^{a},\quad \delta(q^{a},a) = \delta(q^{a},b) = \delta_{\mathrsfs{A}'}(q,\alpha_0),\\
&\delta(q,b) = q^{b},\quad \delta(q^{b},a) = \delta_{\mathrsfs{A}'}(q,\alpha_1),\quad \delta(q^{b},b) = \delta_{\mathrsfs{A}'}(q,\alpha_2).
\end{align*}
Then, $aa,ab$ correspond to applying letter $\alpha_0$, $ba$ corresponds to applying letter $\alpha_1$, and $bb$ corresponds to applying letter $\alpha_2$.
Denote this encoding function by $\phi$, i.e. $\phi(\alpha_0) = aa$, $\phi(\alpha_1) = ba$, and $\phi(\alpha_2) = bb$.
We also extend $\phi$ to words over $\{\alpha_0,\alpha_1,\alpha_2\}^*$ as usual.
For simplicity, we denote also $\phi(q) = \{q, q^{a}, q^{b}\}$, and extend to subsets of $Q_\mathrsfs{A'}$ as usual.

It remains to define the transitions for $F \cup \{z\}$.
We set $\delta(z,a) = z$, $\delta(z,b) = r_0$, and $\delta(f,a) = \delta(f,b) = z$ for each $f \in F$.
Automaton $\mathrsfs{A}$ is shown in Fig.~\ref{fig:avoiding_reduction}.

Observe that $\mathrsfs{A}'$ is strongly connected:
$z$ is reachable from each state, from $z$ we can reach $r_0$ by $\alpha_2$, from $r_0$ we can reach every state from $S'$ by applying a power of letter $\alpha_2$, and we can reach every state of $S$ from the corresponding state from $S'$.
Then every state from $Q_\mathrsfs{B}$ is reachable from a state from $S$ by Property~1.
It follows that $\mathrsfs{A}$ is also strongly connected, since for every $q \in Q_\mathrsfs{A'}$, every state from $\phi(q)$ is reachable from $q$, and since for $F \cup \{z\}$ the outgoing edges correspond to those in $\mathcal{A}$.

Observe that $\mathrsfs{A}$ is synchronizing:
We claim that $a^{4m+6}$ is a reset word for $\mathrsfs{A}$.
Indeed, $aa$ does not map any state into $\phi(S')$.
Every word of length $m+1$ is reset for $\mathrsfs{B}$ and synchronizes to $z$, in particular, $\alpha_0^{m+1}$.
Since $\phi(\alpha_0^{m+1})=a^{2m+2}$ does not contain $bbb$, state $z$ cannot go to $S'$ by a factor of this word.
Hence, we have
$$\delta(Q,a^{2m+4}) \subseteq \{z\} \cup \phi(R).$$
Then, finally, $a^{2(m+1)}$ compresses $\{z\} \cup \phi(R)$ to $z$.

Now, we claim that the original problem of checking whether $\mathrsfs{B}$ has a reset word of length $m$ is equivalent to determining whether $z$ can be avoided in $\mathrsfs{A}$ by a word of length at most $2m+3$.

Suppose that $\mathrsfs{B}$ has a reset word $w$ of length $m$, and consider $u=\phi(\alpha_0w)b$.
Note that $\phi(\alpha_0)=aa$ does not map any state into $\phi(S')$ nor into $\phi(r_0)$.
Hence, we have
$$\delta(Q,\phi(\alpha_0)) \subseteq \phi(Q_\mathrsfs{B}) \cup \phi(R \setminus \{r_0\}).$$
Due to the definition of $\phi$, factor $bbb$ cannot appear in the image of words from $\{\alpha_0,\alpha_1\}^*$ by $\phi$.
Henceforth, $z$ cannot go to $S'$ by a factor of $\phi(w)$.
Since $|\phi(w)|=2m$ and to map $z$ into $\phi(r_m)$  we require a word of length $2m+1$, the factors of $\phi(w)$ do not map $z$ into $\phi(r_m)$.
Since also $w$ is a reset word for $\mathrsfs{B}$ that maps every state from $Q_\mathrsfs{B}$ to $z$, we have
$$\delta(\phi(Q_\mathrsfs{B}),\phi(w)) \subseteq \{z\} \cup \phi(R \setminus \{r_m\}).$$
By the definition of the transitions on $R \cup \{z\}$ (only $\phi(\alpha_2)$ maps $r_0$ outside), and since $|\phi(w)|=2m$, we also have
$$\delta(\phi(R \setminus \{r_0\}),\phi(w)) \subseteq \{z\} \cup \phi(R \setminus \{r_m\}).$$
Finally, we get that $\delta(\{z\} \cup \phi(R \setminus \{r_m\}),b) \subset R$, thus $u$ avoids $z$.

Now, we prove the opposite direction.
Suppose that state $z$ can be avoided by a word $u$ of length at most $2m+3$.
Then, by the definition of the transitions on $R$, $|u|=2m+3$ because $z \in \delta(R,w)$ for each $w$ of length at most $2(m+1)$.
Let $u = u' u'' u'''$ with $|u'|=2$, $|u''|=2m$, and $|u'''|=1$.

For words $w \in \{a,b\}^*$ of even length, we denote by $\tilde{\phi}^{-1}(w)$ the inverse image of encoding $\phi$ with respect to the definition on $\mathrsfs{A}'$, that is,
$\tilde{\phi}^{-1}(aa) = \tilde{\phi}^{-1}(ab) = \alpha_0$, $\tilde{\phi}^{-1}(ba) = \alpha_1$, $\tilde{\phi}^{-1}(bb) = \alpha_2$, which is extended to words of even length by concatenation.

First notice that $\tilde{\phi}^{-1}(u') \neq \alpha_2$. Otherwise $\{z,r_0,r_1,r_2,\dots,r_m\} \subseteq \delta(S' \cup R \cup \{z\}, \tilde{\phi}^{-1}(u'))$ whence by the definition of $R$ the word $u''u'''$ of length $2m+1$ cannot avoid $z$. Therefore $\tilde{\phi}^{-1}(u') \neq \alpha_2$ and $S \subseteq \delta(S \cup S',u')$.

If $\alpha_2$ is the second letter of $\tilde{\phi}^{-1}(u)$, then $s_d$ goes to $r_1$ and we get $\{r_1,r_2,\dots,r_m,z\}$ in the image of the prefix of $u$ of length $4$. Then, due to the definition of $R$, no word of length at most $2m$ can avoid $z$.

Hence, the first two letters of $\tilde{\phi}^{-1}(u)$ are either $\alpha_0$ or $\alpha_1$.

By Property~2 of $\mathrsfs{B}$, every zero-one word of length $m$ maps $s \in S$ into $\{z\} \cup F$. Since the letter $\alpha_2$ acts like $\alpha_0$ on $Q_\mathrsfs{B} \setminus S$ in $\mathrsfs{A}'$ and $\tilde{\phi}^{-1}(u'')$ starts with $\alpha_0$ or $\alpha_1$, $u''$ maps $S$ into $\{z\} \cup F$. If $u''$ maps some state to $F$, then by Property~3 $u$ cannot avoid $z$. Hence, $\tilde{\phi}^{-1}(u'')$ with all $\alpha_2$ replaced with $\alpha_0$ must be a reset word for $\mathrsfs{B}$.
\end{proof}

By a~corollary from Theorem~\ref{thm:avoiding_length_NP-c} and Theorem~\ref{thm:totallyextensible_large}, we complete our results about extending subsets.

\begin{corollary}\label{cor:totallyextending_large_len_sych}
Problem~\ref{pbm:totallyextensible_large_len} is NP-complete,
Problem~\ref{pbm:totallyextensible_len} is NP-complete when the automaton is synchronizing, and Problem~\ref{pbm:extensible_large_len} is NP-complete when the automaton is strongly connected and synchronizing.
They remain NP-complete when the automaton is simultaneously strongly connected, synchronizing, and binary.
\end{corollary}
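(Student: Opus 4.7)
The plan is to derive all three NP-completeness statements by combining Theorems~\ref{thm:avoiding_length_NP-c} and~\ref{thm:totallyextensible_large} with the classical polynomial upper bound on the length of a shortest reset word in a synchronizing automaton.

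For \emph{NP membership}, we exhibit in each case a polynomial upper bound on the length of a shortest witness word, which supports a standard guess-and-verify certificate. For Problem~\ref{pbm:totallyextensible_large_len}, the condition $S\cdot w^{-1}=Q$ is equivalent to $(Q\cdot w)\cap(Q\setminus S)=\emptyset$, i.e.\ $w$ avoids the subset $Q\setminus S$ of size at most $k$; Theorem~\ref{thm:totallyextensible_large} then produces a witness of length $\O(n^3+n^k)$, polynomial for fixed $k$. For Problem~\ref{pbm:totallyextensible_len} with $\mathrsfs{A}$ synchronizing, Theorem~\ref{thm:totallyextensible_synchro} implies that a totally extending word exists iff $S$ meets the unique sink component; in that case a shortest reset word (of polynomial length) concatenated with a short transfer path inside the sink component into some $q\in S$ yields a totally extending word of polynomial length. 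For Problem~\ref{pbm:extensible_large_len} with $\mathrsfs{A}$ strongly connected and synchronizing, every non-empty proper subset is totally extensible by a word of polynomial length built exactly as above, which in particular is an extending word. The verifier then guesses a word of length $\min(\ell,B)$ for the appropriate polynomial bound $B$ and checks the preimage condition in polynomial time.

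For \emph{NP-hardness}, we reduce from Problem~\ref{pbm:avoidable_state_len}. Given an instance $(\mathrsfs{A},q,\ell)$, set $S=Q\setminus\{q\}$, so that $|Q\setminus S|=1\le k$ for every fixed $k\ge 1$, and the three preimage conditions
\[
|S\cdot w^{-1}|>|S|,\qquad S\cdot w^{-1}=Q,\qquad q\notin Q\cdot w
\]
are mutually equivalent. Hence $(\mathrsfs{A},S,\ell)$ is a yes-instance of each of the three target problems iff $(\mathrsfs{A},q,\ell)$ is a yes-instance of the avoidance problem. By Theorem~\ref{thm:avoiding_length_NP-c} the latter is NP-hard even on strongly connected, synchronizing, binary automata, and since we do not modify the automaton these structural properties transfer directly to the reduced instance, yielding NP-hardness in all the claimed restricted classes.

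The argument is essentially bookkeeping and no step presents a genuine obstacle; the only point that requires some care is verifying that the polynomial length bounds are uniform in the input, so that restricting the guessed word to length $\min(\ell,B)$ still produces a polynomial-size certificate when $\ell$ is given in binary and may itself be exponentially large. With that in place, combining NP membership with NP-hardness yields all three NP-completeness statements in the restricted subclasses specified.
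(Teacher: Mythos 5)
Your proposal is correct and follows essentially the same route as the paper: NP-hardness via the reduction $S=Q\setminus\{q\}$ from Theorem~\ref{thm:avoiding_length_NP-c}, and NP membership via the polynomial witness bounds from Theorem~\ref{thm:totallyextensible_large} (for Problem~\ref{pbm:totallyextensible_large_len}) and from the cubic reset-word bound plus a transfer word inside the sink component (for Problems~\ref{pbm:totallyextensible_len} and~\ref{pbm:extensible_large_len}). The care you take with $\ell$ given in binary, guessing a word of length at most $\min(\ell,B)$, matches the paper's intent.
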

\begin{proof}
NP-hardness for all the problems follows from Theorem~\ref{thm:avoiding_length_NP-c}, since we can set $S = Q \setminus \{q\}$.

Problem~\ref{pbm:totallyextensible_large_len} is solvable in NP as follows.
By Theorem~\ref{thm:totallyextensible_large} if there exists a~totally extending word, then there exists such a~word of polynomial length.
Thus we first run this algorithm, and if there is no totally extending word then we answer negatively.
Otherwise, we know that the length of the shortest totally extending words is polynomially bounded, so we can nondeterministically guess such a~word of length at most $\ell$ and verify whether it is totally extending.

Similarly, Problem~\ref{pbm:totallyextensible_len} is solvable in NP for synchronizing automata.
For a~synchronizing automaton there exists a~reset word $w$ of length at most $n^3$ \cite{Volkov2008Survey}.
Furthermore, if $S$ is totally extensible, then there must exist a~reset word $w$ such that $Q\cdot w = \{q\} \subseteq S$, which has length at most $n^3+n-1$.
Therefore, if the given $\ell$ is larger than this bound, we answer positively.
Otherwise, we nondeterministically guess a~word of length at most $\ell$ and verify whether it totally extends $S$.

By the same argument for Problem~\ref{pbm:extensible_large_len}, if the automaton is strongly connected and synchronizing, then for a~non-empty proper subset of $Q$ using a~reset word we can always find an~extending word of length at most $n^3+n-1$, thus the problem is solvable in NP. 
\end{proof}

\section{Resizing a~subset}\label{sec:resize}

In this section we deal with the following two problems:

\begin{problem}[Resizable subset]\label{pbm:resize}
Given an~automaton $\mathrsfs{A}=(Q,\Sigma,\delta)$ and a~subset $S \subseteq Q$, is $S$ resizeable?
\end{problem}

\begin{problem}[Resizable subset by short word]\label{pbm:resize_len}
Given an~automaton $\mathrsfs{A}=(Q,\Sigma,\delta)$, a~subset $S \subseteq Q$, and an~integer $\ell$ given in binary representation, is $S$ resizeable by a word of length at most $\ell$?
\end{problem}

In contrast to the cases $|S\cdot w^{-1}|>|S|$ and $|S\cdot w^{-1}|<|S|$, there exists a~polynomial-time algorithm for both these problems.
Furthermore, we prove that if $S$ is resizeable, then the length of the shortest resizing words is at most $n-1$.

To obtain a polynomial-time algorithm, one could reduce Problem~\ref{pbm:resize} to the \emph{multiplicity equivalence of NFAs}, which is the problem whether two given NFAs have the same number of accepting paths for every word.
It can be solved in $\O(|\Sigma| n^4)$ time by a Tzeng's algorithm \cite{Tzeng1989}, assuming that arithmetic operations on real numbers have a unitary cost; this algorithm relies on linear algebra methods.
Alternatively, it can be solved in $\O(|\Sigma|^2 n^3)$ time by an algorithm of Archangelsky \cite{Archangelsky2003}.
It was noted by Diekert that the Tzeng's algorithm could be improved to $\O(|\Sigma| n^3)$ time \cite{Archangelsky2003} (unpublished).

However, to obtain the tight upper bound $n-1$ on the length we need to design and analyze a specialized algorithm for our problem. It is also based on the Tzeng's linear algebraic method.

\begin{theorem}\label{thm:resize_algorithm}
Assuming that in our computational model every arithmetic operation has a unitary cost, there is an algorithm with $\O(|\Sigma| n^3)$ time and $\O(|\Sigma|n+n^2)$ space complexity, which, given an~$n$-state automaton $\mathrsfs{A}=(Q,\Sigma,\delta)$ and a~subset $S \subseteq Q$, returns the minimum length $\ell$ such that $|S\cdot w^{-1}| \neq |S|$ for some word $w \in \Sigma^{\leq \ell}$ if it exists or reports that there is no such a word.
Furthermore, we always have $1 \le \ell \le n-1$.
\end{theorem}
\begin{proof}
The idea of the algorithm is based on the \emph{ascending chain condition}, often used for automata (e.g.\ \cite{Kari2003Eulerian,Pin1972Utilisation,Szykula2018ImprovingTheUpperBound}).
We need to introduce a few definitions from linear algebra.
We associate a natural linear structure with automaton $\mathrsfs{A}$.
By $\mathbb{R}^n$ we denote the real $n$-dimensional linear space of row vectors.
The value at an $i$-th entry of a vector $v \in \mathbb{R}^n$ we denote by $v(i)$. Without loss of generality, we assume that $Q=\{1,2,\dots,n\}$ and then assign to each subset $K\subseteq Q$ its \emph{characteristic vector} $[K] \in \mathbb{R}^n$, whose $i$-th entry $v(i)=1$ if $i \in K$, and $v(i)=0$, otherwise.
By $\lspan(S)$ we denote the linear span of $S \subseteq \mathbb{R}^n$.
The \emph{dimension} of a linear subspace $L$ is denoted by $\dim(L)$.

Each word $w \in \Sigma^*$ corresponds to a linear transformation of $\mathbb{R}^n$.
By $[w]$ we denote the matrix of this transformation in the standard basis $[1],\ldots,[n]$ of $\mathbb{R}^n$.
For example, if $\mathrsfs{A}$ is the automaton from Fig.~\ref{fig:cerny4}, then
$$[a]=\left(
  \begin{smallmatrix}
    0 & 1 & 0 & 0\\
    0 & 0 & 1 & 0\\
    0 & 0 & 0 & 1\\
    1 & 0 & 0 & 0
  \end{smallmatrix}
\right),\ [b]=\left(
  \begin{smallmatrix}
    1 & 0 & 0 & 0\\
    0 & 1 & 0 & 0\\
    0 & 0 & 1 & 0\\
    1 & 0 & 0 & 0
  \end{smallmatrix}
\right),\ [ba]=\left(
  \begin{smallmatrix}
    0 & 1 & 0 & 0\\
    0 & 0 & 1 & 0\\
    0 & 0 & 0 & 1\\
    0 & 1 & 0 & 0
  \end{smallmatrix}
\right).$$
Clearly, as the automaton is deterministic, the matrix $[w]$ has exactly one non-zero entry in each row.
In particular, $[w]$ is \emph{row stochastic}, which means that the sum of entries in each row is equal to $1$.
For every words $u,v \in \Sigma^*$, we have $[uv]=[u][v]$. 
By $[w]^T$ we denote the transpose of the matrix $[w]$.
The transpose corresponds to the preimage by the action of a word; one verifies that $[S \cdot w^{-1}]=[S][w]^T$.
For two vectors $v_1,v_2 \in \mathbb{R}^n$, we denote their usual inner (scalar) product by $v_1 \scalar v_2$.

\noindent\textit{Algorithm description}.
Now, we design the algorithm, which consists of two parts.

First, consider the auxiliary $\Call{Filter}{}$ function shown in Algorithm~\ref{alg:filter}.
Its goal is to filter a stream of vectors $g \in \mathbb{R}^n$, keeping only a subset of those vectors that are linearly independent.
To perform this subroutine efficiently, we maintain a sequence of vectors $G$ (basis) and a sequence of indices $I$, which are empty at the beginning.
Every time, we use the Gaussian approach to reduce the matrix of vectors from $G$ to a \emph{pseudo-triangular} form.
The sequence of (column) indices $I = (i_1,i_2,\ldots,i_k)$ and vectors $G = (g_1,\ldots,g_k)$ have the property that for each $j$, $1 \le j \le k$, there is exactly one vector from $\{g_1,\ldots,g_k\}$ with non-zero $i_j$-th entry, which contains $1$.

\begin{algorithm}[htb]
\caption{Filter.}\label{alg:filter}
\begin{algorithmic}[1]
\State $G \gets (), I \gets ()$.
\Comment{Global initialization}
\Function{Feed}{$g \in \mathbb{R}^n$} 
    \State $g' 
    \gets g - \sum_{r=1}^{k} {g(i_r)}\cdot g_r$
    \If{$g' = 0$}
        \State \Return \textbf{False}
    \Else
        \State $i' \gets \min(i \mid g(i) \neq 0)$
        \State $g' \gets g' / g'(i')$
        \ForAll{$g_r$ from $G$}
            \State $g_r \gets g_r - g_r(i') \cdot g'$
        \EndFor
        \State Append $g'$ to $G$
        \State Append $i'$ to $I$
        \State \Return \textbf{True}
    \EndIf
\EndFunction
\end{algorithmic}
\end{algorithm}

We begin with the first non-zero vector $g_1$ and put its smallest index $i$ of a non-zero entry to $I$, and the vector itself is normalized to have $1$ in the $i$-th entry.
Now, suppose we are given a vector $g$ and we have already built $G = (g_1,\ldots,g_k)$ and $I = (i_1,i_2,\dots,i_{k})$ with aforementioned properties.
Then, we just compute $g' = g - \sum_{r=1}^{k} {g(i_r)}\cdot g_r$.
Due to the construction, all the entries at the coordinates from $I$ in $g'$ are zero.
If there is a non-zero coordinate left in $g'$, then we need to normalize $g'$, and it to $G$, and update the previous vectors.
So we take the smallest coordinate $i'$ whose entry is non-zero in $g'$, normalize $g'$ to have $1$ in the $i'$-th entry, and add $g'$ to $G$.
To update the previous vectors, for each $r$, $1 \le r \le k$, we set $g_r \gets g_r - g_r(i')\cdot g'$, which results in that $g_r$ has now zero in the $i'$-th entry, and finally we add $i'$ to $I$.
In the opposite case, if $g'=0$, then $g$ belongs to $\lspan(G)$ and thus should not be added.

Note that at any point, the set $G$ is a basis of the linear span of all the processed vectors, which is a straightforward corollary from using the Gaussian approach.

\begin{algorithm}[htb]
\caption{Resizing a subset.}\label{alg:resize_a_subset}
\begin{algorithmic}[1]
\Require An automaton $\mathrsfs{A}=(Q,\Sigma,\delta)$, a~subset $S \subseteq Q$
\State $W_0 \gets \{[Q]\}$
\For{$i$ \textbf{from} $1$ \textbf{to} $n-1$}
    \State $D \gets \{ g[a] \mid g \in W_{i-1}, a\in \Sigma \}$
    \State $W_i \gets \{\}$
    \ForAll{$z \in D$}
        \If{$[S] \scalar z \neq |S|$}
            \State \Return $i$
        \ElsIf{$\Call{Feed}{z}$}
            \State Add $z$ to $W_{i}$
        \EndIf
    \EndFor
    \If{$W_i = \emptyset$}
        \State \Return \textit{None}
    \EndIf
\EndFor
\State \Return \textit{None}
\end{algorithmic}
\end{algorithm}

We now turn to the main procedure of our algorithm, which is shown in Algorithm~\ref{alg:resize_a_subset}.
Our goal is to find the minimum length of a word $w$ such that $|S\cdot w^{-1}| \neq |S|$.
This is equivalent to $[S] \scalar [Q][w] \neq |S|$.
We do this by using a \emph{wave approach} as in breadth-first search.
We start by feeding $[Q]$ to $\Call{Filter}{}$ and let $W_0 = \{[Q]\}$.
Then in each iteration $1 \le i \le n-1$, we consider the set of vectors $D = \{ g[a] \mid g \in W_{i-1}, a\in \Sigma \}$ and build a new subset of independent vectors $W_i$ as follows.
For each vector $z$ from $D$, we first check whether $[S] \scalar z = |S|$.
If this is not the case, we claim that $i$ is the length of a shortest word which changes the size of the preimage of $S$.
Otherwise, we feed $z$ to $\Call{Filter}{}$ and add it to (initially empty) $W_i$ if the corresponding basis vector was added to $G$.
Note that the current $G$ after the $i$-th iteration is equal to $\bigcup_{j=0}^i W_i$.
We stop if either $W_i = \emptyset$ or the last $(n-1)$-th iteration ends, which means that there is no resizing word.

\noindent\textit{Correctness}.
To prove the correctness, note that by the construction all vectors from $W_i$ can be written as $[Q][w]$ for some word $w$ of length $i$.
Thus, if we have found a vector $z \in D$ such that $[S] \scalar z \neq |S|$, this means there is a word $w$ of length $i$ such that $$[S] \scalar [Q][w] = [S \cdot w^{-1}] \scalar [Q] = |S \cdot w^{-1}| \neq |S|.$$

It remains to show that if we get to an $i$-th iteration, then there is no word $w$ of length less than $i$ which violates $[S] \scalar [Q][w] = |S|$.
For $r \geq 0$, denote $U_r = \bigcup_{i=0}^{r} W_i$.
We prove by induction that for each word $w$ of length $r < i$, $[Q][w] \in \lspan([Q][U_{r}])$.
For $r=0$ this is trivial.
If $r>0$, then $w=w' a$ for some $a \in \Sigma$ and by induction $[Q][w'] \in \lspan([Q][U_{r-1}])$, that is,
$$[Q][w'] = \sum_{j=0}^{r-1}\sum_{u \in W_j}\lambda_u [Q][u],$$ 
for some values $\lambda_u \in \mathbb{R}$.
It follows that 
$$[Q][w' a] = [Q][w'][a] = \sum_{j=0}^{r-1}\sum_{u \in W_j}\lambda_u [Q][u][a] = g_v + \sum_{u \in W_{r-1}}\lambda_u [Q][u][a],$$ 
where $g_v \in \lspan([Q][U_{r-1}])$.
By the construction, we feed all vectors of the form $[Q][u][a]$ for $u \in W_{r-1}$ and $a \in \Sigma$ to $\Call{Filter}{}$ function.
Since the added vectors to $G$, and so to $W_r$, are a linear basis of the linear span of all the processed vectors, every vector $[Q][u][a]$ belongs to $\lspan([Q][U_r])$, which proves the induction step.

Thus, if we had a word of length $w$ of length less than $i$ with $[S] \scalar [Q][w] \neq |S|$, we would have $[Q][w] = \sum_{u \in U_{i-1}}\lambda_u [Q][u]$ for some $\lambda_u \in \mathbb{R}$.
Now, on the one hand we have 
\begin{equation}\label{eq:lambda_sum}
n = [Q][w] \scalar [Q] = \sum_{u \in U_{i-1}}\lambda_u ([Q][u] \scalar [Q]) = n\sum_{u \in U_{i-1}}\lambda_u,
\end{equation}
while on the other hand we have  
$$|S| \neq [Q][w] \scalar [S] = \sum_{u \in U_{i-1}}\lambda_u [Q][u] \scalar [S] = \sum_{u \in U_{i-1}}\lambda_u |S|$$
contradicting (\ref{eq:lambda_sum}).

On the other hand, if $W_i$ is empty for an $i<n$, this means that $\lspan([Q][\Sigma^{\leq i}]) = \lspan([Q][\Sigma^{\leq i-1}])$ and by the linear extending argument we know that the same holds for all $j \geq i$, hence there cannot be a word that violates $[S] \scalar [Q][w] = |S|$.
Note that if there is no resizing word, then we always have this case for some $i<n$, because $\dim(\lspan([Q][w]\mid w \in \Sigma^*)) \le n-1$ and the vectors from all $W_j$ are a basis.

We also conclude that $i$ cannot exceed $n-1$, which proves that the shortest resizing words have length at most $n-1$.
Note that the upper bound $n-1$ is the best possible, at least in the cases $|S| \in \{1,n-1\}$, which can be observed in the \v{C}ern\'{y} automata (see Fig.~\ref{fig:cerny4} with $S=\{3\}$).

\noindent\textit{Complexity}.
Assume that in our computational model every arithmetic operation has a unitary cost.
Then clearly a $k$-th call of $\Call{Feed}{}$ can be performed in $\O(kn)$-time.
However, note that, if an exact computation is performed using rational numbers, then we may require to handle values of exponential order, and the total complexity would depend on the algorithms used for particular arithmetic operations.

Notice that at an $i$-th iteration, we call $\Call{Feed}{}$ at most $|\Sigma| |W_i|$ times, since, by the construction, sets $W_i$ are disjoint because the corresponding vectors are independent.
Since the complexity of $\Call{Feed}{}$ is in $\O(n^2)$, all calls work in $\O(|\Sigma|n^3)$-time.
The other operations took amortized time at most $\O(|\Sigma| n^2)$, which is the cost of computing sets $D$ (at most $n$ vectors in sets $W_i$; note that one $g[a]$ can be computed in $\O(n)$ time, because the automaton is deterministic).
Thus, the whole algorithm works in $\O(|\Sigma|n^3)$ time.

The space complexity is at most $\O(|\Sigma|n+n^2)$, which is caused by storing the automaton and at most $\O(n^2)$ vectors in the sets $W_i$, $G$, and $I$.
\end{proof}

The running time $\O(|\Sigma|n^3)$ of the algorithm is quite large (and may require large arithmetic as discussed in the proof), and it is an~interesting open question whether there is a~faster algorithm for Problems~\ref{pbm:resize} and~\ref{pbm:resize_len}.

We note that Problem~\ref{pbm:resize} becomes trivial when the automaton is synchronizing:
A~word resizing the subset exists if and only if $S \ne \emptyset$ and $S \ne Q$, because if $w$ is a~reset word and $\{q\} = Q\cdot w$, then $S\cdot w^{-1}$ is either $Q$ when $q \in S$ or $\emptyset$ when $q \notin S$.
This implies that there exists a~faster algorithm in the sense of expected running time when the automaton over at~least a~binary alphabet is drawn uniformly at random:

\begin{remark}
The algorithm from~\cite{Berlinkov2016OnTheProbabilityToBeSynchronizable} checks in expected $\O(n)$ time (regardless of the alphabet size, which is not fixed) whether a~random automaton is synchronizing, and it is synchronizing with probability $1-\varTheta(1/n^{0.5|\Sigma|})$ (for $|\Sigma|\ge 2$).
Then only if it is not synchronizing we have to use the algorithm from Theorem~\ref{thm:resize_algorithm}.
Thus, Problem~\ref{pbm:resize_len} can be solved for a~random automaton in the expected time
\begin{equation*}
\O(|\Sigma|n^3)\cdot\varTheta(1/n^{0.5|\Sigma|}) + \O(n) = \O(|\Sigma|n^{3-0.5|\Sigma|}) \leq \O(n^2).
\end{equation*}
Note that the bound is independent on the alphabet size, and this is because a~random automaton with a~growing alphabet is more likely to be synchronizing, so less likely we need to use Theorem~\ref{thm:resize_algorithm}.
\end{remark}

\section{Conclusions}

We have established the computational complexity of problems related to extending words.
Indirectly, our results about the complexity imply also the bounds on the length of the shortest compressing/extending words, which are of separate interest.
In particular, PSPACE-hardness implies that the shortest words can be exponentially long in this case, and polynomial deterministic or nondeterministic algorithms in our proofs imply polynomial upper bounds.
For example, the question about the length of the shortest totally extending words (in the equivalent terms of compressing $Q$ to a~subset included in $S$) was recently considered \cite{GonzeJungers2018OnCompletelyReachable}, and from our results (PSPACE-completeness) we could infer an~answer that the tight upper bound is exponential.
The algorithm from Theorem~\ref{thm:totallyextensible_large} implies also a bound on the length of the shortest avoiding words for a subset.
That length is at least cubic, which is useless in the case of synchronizing automata, since reset words can be used as avoiding words and there exists a cubic upper bound on the length of the shortest reset words \cite{Shitov2019,Szykula2018ImprovingTheUpperBound}.

Some problems are left open.
In Tables~\ref{tab:complexities_existence} and~\ref{tab:complexities_length} there is a gap.
The complexity of the existence of an extending word when the subset is large (Problem~\ref{pbm:extensible_large}) and the automaton is strongly connected is unknown.
The same holds in the case when the length of the extending word is bounded (Problem~\ref{pbm:extensible_large_len}); now, we can only conclude that it is NP-hard, which follows from Corollary~\ref{cor:totallyextending_large_len_sych}.
The proof of Theorem~\ref{thm:extensible_large} relies on the automaton being not strongly connected.

Further questions may concern other complexity classes like NL (cf.\ Theorem~\ref{thm:totallyextensible_synchro}).
Also, one could try improving the complexity of algorithms, in particular, those from Theorems~\ref{thm:avoiding_characterization} and~\ref{thm:totallyextensible_large} for avoiding words, and also that from Theorem~\ref{thm:resize_algorithm} for resizing words.

\section*{Acknowledgements}

We thank the anonymous referee for careful reading and detailed comments.
This work was supported by the Competitiveness Enhancement Program of Ural Federal University (Mikhail Berlinkov), and by the National Science Centre, Poland under project number 2014/15/B/ST6/00615 (Robert Ferens) and 2017/25/B/ST6/01920 (Marek Szyku{\l}a).

\bibliographystyle{plainurl}
\bibliography{synchronization}
\end{document}